\definecolor{e-mail}{rgb}{0,.40,.80}
\definecolor{reference}{rgb}{.20,.60,.22}
\definecolor{citation}{rgb}{0,.40,.80}
\newtheorem{thm}{Theorem}
\newtheorem{cor}[thm]{Corollary}
\newtheorem{lem}[thm]{Lemma}
\newtheorem{prop}[thm]{Proposition}
\theoremstyle{definition}
\newtheorem{defn}[thm]{Definition}
\theoremstyle{remark}
\newtheorem{rem}[thm]{Remark}
\numberwithin{thm}{section}
\theoremstyle{definition}
\theoremstyle{definition}
\theoremstyle{definition}
\numberwithin{equation}{section}
\newcommand{\N}{\mathbb N} 
\newcommand{\Z}{\mathbb Z} 
\newcommand{\K}{\mathbb K} 
\newcommand{\Kx}{\mathbb{K}(x)}
\title[Computational approach to summability via residues]{A computational approach to rational summability and its applications via discrete residues}
\author{Carlos E. Arreche}
\address{Department of Mathematical Sciences \\ The University of Texas at Dallas}
\email{arreche@utdallas.edu}
\author{Hari P. Sitaula}
\address{Department of Mathematical Sciences \\  Montana Technological University}
\email{hsitaula@mtech.edu}
\begin{document}

\begin{abstract}
A rational function $f(x)$ is rationally summable if there exists a rational function $g(x)$ such that $f(x)=g(x+1)-g(x)$. 
Detecting whether a given rational function is summable is an important and basic computational subproblem that arises in algorithms to study diverse aspects of shift difference equations. The discrete residues introduced by Chen and Singer in 2012 enjoy the obstruction-theoretic property that a rational function is summable if and only if all its discrete residues vanish. However, these discrete residues are defined in terms of the data in the complete partial fraction decomposition of the given rational function, which cannot be accessed computationally in general. We explain how to efficiently compute (a rational representation of) the discrete residues of any rational function, relying only on gcd computations, linear algebra, and a black box algorithm to compute the autodispersion set of the denominator polynomial. We also explain how to apply our algorithms to serial summability and creative telescoping problems, and how to apply these computations to compute Galois groups of difference equations.
\end{abstract}

\subjclass[2020]{39A06, 33F10, 68W30, 40C15, 11Y50}
\keywords{discrete residues, 
creative telescoping, 
rational summability,
 Hermite reduction}

\maketitle

\tableofcontents

\section{Introduction} \label{sec:introduction}

Let $\K$ be a field of characteristic zero, not assumed to be algebraically closed, and denote by $\Kx$ the field of rational functions in an indeterminate $x$ with coefficients in $\K$. Given $f(x)\in\Kx$, the \emph{rational summation problem} originally posed in \cite{Abramov:1971} asks to construct $g(x),h(x)\in\Kx$ such that \begin{equation} \label{eq:summation-problem}f(x)=g(x+1)-g(x) +h(x),\end{equation} and such that the degree of the denominator of $h(x)$ is as small as possible. Such an $h(x)$ is called a \emph{reduced form} of $f(x)$.
The rational summation problem has attracted much attention over the last half century \cite{Abramov:1971,Abramov:1975,Moenck:1977,Karr1981,Paule:1995,Pirastu1995a,Pirastu1995b, Malm:1995, Abramov1995,polyakov:2008}. It is easy to see that the problem admits a solution, starting from the trivial solution $(g(x),h(x))=(f(x-1),f(x-1))$ and applying the well-ordering principle. Moreover, given a solution $(g(x),h(x))$ we obtain another solution $(g(x)-h(x),h(x+1))$ since the degree of the denominator of $h(x+1)$ is the same as that of $h(x)$, and therefore these solutions are not unique. In the above references, several different choices are made for addressing this non-uniqueness in algorithmic settings -- for example, one can ask for the denominator of $g(x)$ to be also as small as possible, and/or to compute some (any) solution $(g(x),h(x))$ to \eqref{eq:summation-problem} as efficiently as possible. The introduction of \cite{Pirastu1995b} contains a concise summary and comparison between most of these different approaches.

Every algorithm for solving the rational summation problem also addresses, as a byproduct, the \emph{rational summability problem} of deciding, for a given $f(x)\in\Kx$, whether (just yes/no) there exists $g(x)\in \Kx$ such that $f(x)=g(x+1)-g(x)$. In this case we say that $f(x)$ is \emph{rationally summable}. There are various algorithms for addressing this simpler question, designed to forego the usually expensive and often irrelevant computation of the \emph{certificate} $g(x)$, which are presented and discussed for example in \cite{Matusevich:2000,Gerhard:2003,BCCL:2010,chen-singer:2012,Chen:2015,HouWang2015,Giesbrecht:2022} and the references therein.

The main goal of this work is to explain how to implement effectively and efficiently the residue-theoretic approach to rational summability proposed in \cite{chen-singer:2012}, following and expanding upon the preliminary presentation developed in \cite{arreche-sitaula:2024}. The \emph{discrete residues} of $f(x)\in\Kx$ (see Definition~\ref{def:dres}) are certain constants, belonging to the algebraic closure $\overline\K$ and defined in terms of the complete partial fraction decomposition of $f(x)$, which have the obstruction-theoretic property that they are all zero if and only if $f(x)$ is rationally summable. Computing these discrete residues directly from their definition is not possible in general, because neither is the apparently necessary prior computation of the complete partial fraction decomposition.

The algorithms that we propose here avoid expensive/impossible complete factorizations of denominator polynomials into linear factors, and rely \emph{almost} exclusively on gcd computations and linear algebra. We say ``almost'' because we do require a black box algorithm to compute the \emph{autodispersion set} of a polynomial, for which the most efficient known method to compute it relies on computing irreducible factorizations of denominator polynomials in $\K[x]$. Besides the method of computation, another problem inherent in the original definition of discrete residues is that in general they belong to algebraic extensions of $\K$ of arbitrarily high degree. Thus, inspired by \cite[Thm.~1]{bronstein:1993}, we propose a $\K$-rational representation of the discrete residues of $f\in\Kx$ consisting of pairs of polynomials. For each such pair, the first polynomial encodes in its set of roots \emph{where} $f$ has non-zero residues, and the second polynomial encodes in its evaluation at each such root \emph{what} the corresponding discrete residue is (see Definition~\ref{def:goal} for more details).

A key idea that enables all our computations is that iteratively applying classical Hermite reduction to $f(x)$ allows us to focus our attention on the special case where $f(x)$ has only simple poles. This idea is already suggested in \cite[\S5]{Horowitz1971}, but does not seem to be widely utilized in the literature. Thus a secondary goal of this work is to proselytize for this useful trick. Although we have not yet formally computed the theoretical worst-case complexity of this procedure (see Algorithm~\ref{alg:hermite-list}), we conjecture that it should be asymptotically the same (as a function of the degree of the denominator) as that of applying classical Hermite reduction only once.

Having reduced to the case of rational functions with only simple poles, all the varied approaches to compute reduced forms $h(x)$ of $f(x)$ as in \eqref{eq:summation-problem} seem now to proceed quite similarly to one another. As we remarked in \cite{arreche-sitaula:2024}, our own approach is based on similar ideas as that of \cite[\S5]{Gerhard:2003}. But the latter procedure is necessarily more complicated than ours, because it allows more general inputs than we need to, having carried out our earlier reduction to the case where $f(x)$ has only simple poles. The relative simplicity of our procedure is precisely what enables us to easily produce variants of our basic procedures that are better adapted to particular purposes.

Once we have computed a reduced form $\bar{f}(x)$ of $f(x)$ whose denominator is both squarefree and shift-free, the discrete residues of $f(x)$ coincide with the classical first-order residues of $\bar{f}(x)$. The factorization-free computation of the latter is achieved by the well-known computation in \cite[Lem.~5.1]{trager:1976}.

Our own interest in computing discrete residues is motivated by the following serial variant of the summability problem, which often arises as a basic subproblem in algorithms for computing Galois groups associated with shift difference equations \cite{vanderput-singer:1997,hendriks:1998}. Given an $n$-tuple $\mathbf{f}=(f_1(x),\dots,f_n(x))\in\Kx^n$, we show in Proposition~\ref{prop:v-space} how to compute (a basis for) the $\K$-vector space \begin{equation}\label{eq:multi-telescoper}V(\mathbf{f}):=\bigl\{\mathbf{v}=(v_1,\dots,v_n)\in\K^n \ | \ v_1f_1+\dots+v_nf_n \ \text{is summable}\bigr\}.\end{equation} The adaptability of our procedures enables us also to address in Proposition~\ref{prop:w-space} serial differential creative telescoping problems generalizing \eqref{eq:multi-telescoper} by replacing the $v_i\in\K$ with unknown linear differential operators with $\K$ coefficients $\mathcal{L}_i\in\K\left[\frac{d}{dx}\right]$. Also analogously, this kind of problem often arises in the computation of differential Galois groups of shift difference equations \cite{HardouinSinger2008,arreche:2017}. In the computation of this space of differential operators there arises the additional difficulty of having to produce an a priori bound on the orders of the operators that need to be considered. As far as we know, the first ever such bound is obtained in Proposition~\ref{prop:w-space-bound}, relying on our algorithmic results on discrete residues.

Our approach based on discrete residues makes it very straightforward how to accommodate several $f_i(x)$ simultaneously as in \eqref{eq:multi-telescoper}, which adaptation is less obvious (to us) how to carry out efficiently using other reduction methods (but see \cite[\S5]{Chen:2015}). There are also direct analogues of discrete residues for $q$-difference equations \cite{chen-singer:2012}, Mahler difference equations \cite{arreche-zhang:2022,arreche-zhang:2024}, and elliptic difference equations \cite{dreyfus:2018,HardouinSinger2021, babbitt:2025}. We hope and expect that the conceptual simplicity of the approach that we continue here will be useful in developing analogues in these other related (but much more technically challenging) contexts beyond the shift case. 

\section{Preliminaries} \label{sec:preliminaries}

\subsection{Basic notation and conventions} \label{sec:notation}

We denote by $\N$ the set of strictly positive integers, and by $\K$ a computable field of characteristic zero. We denote by $\overline\K$ a fixed algebraic closure of $\K$. We do not assume $\K$ is algebraically closed, and we will only refer to $\overline\K$ in proofs or for defining theoretical notions, never for computations. We shall impose the following technical assumption on $\mathbb{K}$: either it is feasible to compute integer solutions to arbitrary polynomial equations with coefficients in $\K$, or it is feasible to compute irreducible factorizations of polynomials in $\K[x]$ (see Remark~\ref{rem:shift-set}).

We denote by $\K[x]$ the ring of polynomials, and by $\Kx$ the field of rational functions, in the indeterminate $x$. For $f(x)\in \Kx$, we define \[
\sigma:f(x)\mapsto f(x+1);\qquad\text{and}\qquad \Delta:f(x)\mapsto f(x+1)-f(x).\] Note that $\sigma$ is a $\K$-linear field automorphism of $\Kx$ and $\Delta=\sigma-\mathrm{id}$ is only a $\K$-linear map with $\mathrm{ker}(\Delta)=\K$. We often suppress the functional notation and simply write $f$ instead of $f(x)$, $\sigma(f)$ instead of $f(x+1)$, etc., when no confusion is likely to arise.

We assume implicitly throughout that rational functions are normalized to have monic denominator. Even when our rational functions are obtained as (intermediate) outputs of some procedures, we will take care to arrange things so that this normalization always holds. In particular, we also assume that the outputs of $\mathrm{gcd}$ and $\mathrm{lcm}$ procedures are also always normalized to be monic. We adopt the convention that $\mathrm{deg}(0)=-1$. A \emph{proper} rational function is one whose numerator has strictly smaller degree than that of its denominator. An unadorned $\mathrm{gcd}$ or $\mathrm{lcm}$ or $\mathrm{deg}$ means that it is with respect to $x$. On the few occasions where we need a $\mathrm{gcd}$ with respect to a different variable $z$, we will write $\mathrm{gcd}_z$. We write $\frac{d}{dx}$ (resp., $\frac{d}{dz}$) for the usual derivation operator with respect to $x$ (resp., with respect to $z$).

\subsection{Partial fraction decompositions}

A polynomial $b\in\Kx$ is \emph{squarefree} if $b\neq 0$ and $\mathrm{gcd}\bigl(b,\frac{d}{dx}b\bigr)=1$. Consider a proper rational function $f=\frac{a}{b}\in\Kx$, with $\mathrm{deg}(b)\geq 1$ and $\mathrm{gcd}(a,b)=1$. Suppose further that $b$ is squarefree, and that we are given a set $b_1,\dots,b_n\in\K[x]$ of monic non-constant polynomials such that $\prod_{i=1}^nb_i=b$. Then necessarily $\mathrm{gcd}(b_i,b_j)=1$ whenever $i\neq j$, and there exist unique non-zero polynomials $a_1,\dots,a_n\in\K[x]$ with $\mathrm{deg}(a_i)<\mathrm{deg}(b_i)$ for each $i=1,\dots,n$ such that $f=\sum_{i=1}^n\frac{a_i}{b_i}$. In this situation, we denote \begin{equation}\label{eq:parfrac-def} \mathtt{ParFrac}(f;b_1,\dots,b_n):=(a_1,\dots,a_n).    
\end{equation} We emphasize that the computation of partial fraction decompositions \eqref{eq:parfrac-def} can be done very efficiently \cite{kung:1977}, provided that the denominator $b$ of $f$ has already been factored into pairwise relatively prime factors $b_i$, which need not be irreducible in $\K[x]$. Although one can similarly carry out such partial fraction decompositions more generally for pre-factored denominators $b$ that are not necessarily squarefree, our procedures only need to compute partial fraction decompositions for pre-factored squarefree denominators, in which case the notation \eqref{eq:parfrac-def} is conveniently light.

\subsection{Summability and dispersion} \label{sec:sum-disp}

We say $f\in\Kx$ is \emph{(rationally) summable} if there exists $g\in\Kx$ such that $f=\Delta(g)$. For a non-constant polynomial $b\in\K[x]$, we follow the original \cite{Abramov:1971} in defining the \emph{dispersion} of $b$ \[\label{eq:disp}\mathrm{disp}(b):=\mathrm{max}\{\ell\in\N \ | \ \mathrm{gcd}(b,\sigma^\ell(b))\neq 1\}.\] If $\mathrm{disp}(b)=0$ we say that $b$ is \emph{shiftfree}. For a rational function $f=\frac{a}{b}\in\Kx$ with $\mathrm{gcd}(a,b)=1$ and $b\notin \K$, the \emph{polar dispersion} $\mathrm{pdisp}(f):=\mathrm{disp}(b)$.

We denote by $\overline\K/\Z$ the set of orbits for the action of the additive group $\Z$ on $\overline\K$. For $\alpha\in\overline\K$, we denote \[\omega(\alpha):=\{\alpha+n \ | \ n\in\Z\},\] the unique orbit in $\overline\K/\Z$ containing $\alpha$. We will often simply write $\omega\in\overline\K/\Z$ whenever we have no need to reference a specific $\alpha\in\omega$.

\section{Discrete residues and summability}\label{sec:dres-sum}

\begin{defn}[\protect{\cite[Def.~2.3]{chen-singer:2012}}]\label{def:dres} Let $f\in\Kx$, and consider the complete partial fraction decomposition
\begin{equation} \label{eq:partial-fraction-decomposition}f=p+\sum_{k\in\N}\sum_{\alpha\in\overline\K}\frac{c_k(\alpha)}{(x-\alpha)^k},\end{equation} where $p\in\K[x]$ and all but finitely many of the $c_k(\alpha)\in\overline\K$ are zero for $k\in\N$ and $\alpha\in\overline\K$. The \emph{discrete residue} of $f$ of order $k\in\N$ at the orbit $\omega\in\overline\K/\Z$ is
\begin{equation}\label{eq:dres-def}
    \mathrm{dres}(f,\omega,k):=\sum_{\alpha\in\omega} c_k(\alpha).
\end{equation}
\end{defn}

The relevance of discrete residues to the study of rational summability is captured by the following result.

\begin{prop}[\protect{\cite[Prop.~2.5]{chen-singer:2012}}] \label{prop:dres-sum} 
$f\in\Kx$ is rationally summable if and only if $\mathrm{dres}(f,\omega,k)=0$ for every $\omega\in \overline\K$ and $k\in\N$.
    
\end{prop}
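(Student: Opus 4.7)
The plan is to track how the difference operator $\Delta = \sigma - \mathrm{id}$ interacts with the complete partial fraction decomposition \eqref{eq:partial-fraction-decomposition}. The key computation is that if $g = q + \sum_{k,\alpha} d_k(\alpha)(x-\alpha)^{-k}$ is the partial fraction decomposition of some $g\in\Kx$, then reindexing $\beta = \alpha+1$ in $\sigma(g)$ yields
\[
\Delta(g) \;=\; \sigma(q)-q \;+\; \sum_{k\in\N}\sum_{\alpha\in\overline\K}\frac{d_k(\alpha-1)-d_k(\alpha)}{(x-\alpha)^k},
\]
so the coefficient of $(x-\alpha)^{-k}$ in $\Delta(g)$ is $d_k(\alpha-1)-d_k(\alpha)$. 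Since $\Delta$ is surjective on $\K[x]$ (by induction on degree), the polynomial part is irrelevant to summability, and the problem reduces to analyzing the principal parts orbit by orbit.

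For the forward direction, suppose $f = \Delta(g)$ and let $d_k(\alpha)$ denote the coefficients of $g$. Fix an orbit $\omega\in\overline\K/\Z$ and a representative $\alpha_0\in\omega$. Then
\[
\mathrm{dres}(f,\omega,k) \;=\; \sum_{n\in\Z}\bigl(d_k(\alpha_0+n-1) - d_k(\alpha_0+n)\bigr),
\]
which is a telescoping sum with only finitely many nonzero terms (because $g$ is a rational function). Both the "left" and "right" tails of the $d_k(\alpha_0+n)$ are eventually zero, so the sum collapses to $0$.

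For the reverse direction, suppose all discrete residues of $f$ vanish. Work orbit by orbit: for each $\omega\in\overline\K/\Z$ and $k\in\N$ with some $c_k(\alpha)\neq 0$ for $\alpha\in\omega$, pick $\alpha_0\in\omega$ and choose $N\in\N$ large enough that $c_k(\alpha_0+n)=0$ for $|n|>N$. Define $d_k(\alpha_0+n):=0$ for $n\geq N$ and then recursively set
\[
d_k(\alpha_0+n-1) \;:=\; d_k(\alpha_0+n) + c_k(\alpha_0+n)
\]
for $n\leq N$. The finite telescoping identity gives $d_k(\alpha_0-M-1) = d_k(\alpha_0+N) - \sum_{j=-M}^{N}c_k(\alpha_0+j)$ for $M\geq N$, and the hypothesis $\mathrm{dres}(f,\omega,k)=0$ forces this to be $0$. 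Hence only finitely many $d_k(\alpha)$ are nonzero across all $k$ and $\omega$, so $g := \sum_{k,\alpha} d_k(\alpha)(x-\alpha)^{-k}$ is a bona fide element of $\overline{\K}(x)$ with $\Delta(g)$ equal to the non-polynomial part of $f$; adding to it a polynomial antidifference of the polynomial part of $f$ yields the desired certificate.

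The main subtlety — and really the only nontrivial step — is ensuring the constructed $g$ lies in $\Kx$ rather than merely in $\overline{\K}(x)$. One clean way is to observe that both sides of $f = \Delta(g)$ are Galois-equivariant: since $f\in\Kx$ and the orbit-by-orbit construction is canonical (up to the arbitrary "initial condition" $d_k(\alpha_0+n)=0$ for $n\geq N$, which can be chosen $\mathrm{Gal}(\overline\K/\K)$-equivariantly across the Galois orbit of $\omega$), the resulting $g$ is fixed by $\mathrm{Gal}(\overline\K/\K)$ and hence lies in $\Kx$. Alternatively, one may simply invoke that $\Delta$ commutes with the Galois action, so the $\K$-linear map $\Delta\colon \Kx\to\Kx$ has the same image, intersected with $\Kx$, as $\Delta\colon\overline\K(x)\to\overline\K(x)$.
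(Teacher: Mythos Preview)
The paper does not actually prove this proposition: it is quoted verbatim as \cite[Prop.~2.5]{chen-singer:2012}, and the surrounding paragraph only remarks that it recasts in residue language a summability criterion appearing elsewhere in the literature. So there is no ``paper's own proof'' to compare against.

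Your argument is correct and is essentially the standard one. Two small comments. First, the telescoping formula you wrote has a sign slip: from $d_k(\alpha_0+n-1)-d_k(\alpha_0+n)=c_k(\alpha_0+n)$ one gets
\[
d_k(\alpha_0-M-1)=d_k(\alpha_0+N)+\sum_{j=-M}^{N}c_k(\alpha_0+j),
\]
with a plus rather than a minus; since both summands vanish under your hypotheses this is harmless. Second, of your two descent arguments for $g\in\Kx$, the latter is the cleaner one and can be made fully rigorous by a trace: pick a finite Galois extension $L/\K$ with $g\in L(x)$, and observe that $\tilde g := [L:\K]^{-1}\sum_{\gamma\in\mathrm{Gal}(L/\K)}\gamma(g)\in\Kx$ still satisfies $\Delta(\tilde g)=f$, since $\Delta$ commutes with each $\gamma$ and each $\gamma$ fixes $f$. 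Your first descent argument (choosing the initial conditions Galois-equivariantly) also works, but requires a bit more care to say precisely.
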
 As pointed out in \cite[Rem.~2.6]{chen-singer:2012}, the above Proposition~\ref{prop:dres-sum} recasts in terms of discrete residues a well-known rational summability criterion that reverberates throughout the literature, for example in \cite[p.~305]{Abramov1995}, \cite[Thm.~10]{Matusevich:2000}, \cite[Thm.~11]{Gerhard:2003}, \cite[Cor.~1]{ash:2005}, and \cite[Prop.~3.4]{HouWang2015}. All of these rely in some form or another on the following fundamental result of Abramov, which already gives an important obstruction to summability.

\begin{prop}[\protect{\cite[Prop.~3]{Abramov:1971}}] \label{prop:summable-dispersion} If a proper $0\neq f\in\Kx$ is rationally summable then $\mathrm{pdisp}(f)>0$.
\end{prop}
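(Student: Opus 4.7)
The plan is to argue by contrapositive: starting from any decomposition $f=\Delta(g)=g(x+1)-g(x)$ with $f$ proper and nonzero, I would exhibit two distinct poles of $f$ differing by a positive integer, which forces $\mathrm{pdisp}(f)\geq 1$.  The key tool is an extremal argument over a single $\Z$-orbit of the highest-order poles of $g$.

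First I would normalize $g$ to be proper.  Writing $g=q+r$ with $q\in\K[x]$ and $r$ proper, the polynomial part of $\Delta(g)$ equals $\Delta(q)$, which has degree $\deg(q)-1$.  Since $f=\Delta(g)$ is proper and $\Delta(r)$ is proper, $\Delta(q)=0$, so $q\in\K=\ker(\Delta)$; replacing $g$ with $r$, which must be nonzero since $f\neq 0$, we may assume $g$ itself is proper and nonzero.  Let $m\in\N$ be the maximum pole order of $g$, fix a $\Z$-orbit $\omega\in\overline\K/\Z$ containing at least one pole of $g$ of order $m$, and enumerate the order-$m$ poles of $g$ in $\omega$ as $\alpha+n_1<\dots<\alpha+n_s$ with $n_1,\dots,n_s\in\Z$ and with corresponding nonzero coefficients $c_i\in\overline\K$ of $(x-\alpha-n_i)^{-m}$ in the partial fraction expansion of $g$.

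The crux is the boundary step.  Since $\sigma$ preserves pole orders and lower-order terms of $g$ cannot contribute order-$m$ poles to $f$, the order-$m$ poles of $f$ within $\omega$ are accounted for exactly by the contribution $+c_i$ at $\alpha+n_i-1$ from $g(x+1)$ and the contribution $-c_i$ at $\alpha+n_i$ from $-g(x)$, for $i=1,\dots,s$.  At the leftmost point $\alpha+n_1-1$, the only potential cancellation of $c_1$ would require $n_1-1\in\{n_1,\dots,n_s\}$, contradicting the minimality of $n_1$; symmetrically, the contribution $-c_s$ at $\alpha+n_s$ cannot be cancelled by the maximality of $n_s$.  Hence $\alpha+n_1-1$ and $\alpha+n_s$ are two genuine poles of $f$ in the single orbit $\omega$, and they are distinct because $(\alpha+n_s)-(\alpha+n_1-1)=n_s-n_1+1\geq 1$.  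Therefore the denominator $b$ of $f$ has two roots differing by the positive integer $n_s-n_1+1$, so $\mathrm{pdisp}(f)=\mathrm{disp}(b)\geq n_s-n_1+1\geq 1$.

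I do not anticipate a serious obstacle: the only delicate point is ruling out unwanted cancellations at the extremes, and this is handled cleanly and uniformly by simultaneously working at the maximum pole order (so that no higher-order residues from anywhere else can interfere) and at the endpoints of each orbit (so that no shift from within the orbit can move into those boundary positions from outside the range $n_1,\dots,n_s$).
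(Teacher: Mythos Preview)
Your argument is correct. The paper does not supply its own proof of this proposition; it simply cites \cite[Prop.~3]{Abramov:1971} and moves on. So there is nothing to compare against directly, and your extremal argument over a single $\Z$-orbit at the maximal pole order is a clean, self-contained justification.

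A couple of minor stylistic points. Your phrase ``has degree $\deg(q)-1$'' is slightly imprecise when $q$ is constant, but under the paper's convention $\deg(0)=-1$ it is literally true, and in any case the conclusion $\Delta(q)=0$ follows simply from the observation that $\Delta(q)$ is both a polynomial and proper. Also, strictly speaking you are not arguing by contrapositive: you assume summability and deduce $\mathrm{pdisp}(f)>0$ directly. Neither of these affects the validity of the proof.
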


Discrete residues are also intimately related to the computation of reduced forms for $f$, in the following sense. As discussed in \cite[\S2.4]{chen-singer:2012}, every reduced form $h$ of $f$ as in \eqref{eq:summation-problem} has the form \begin{equation}\label{eq:general-reduced-form}h=\sum_{k\in\N}\sum_{\omega\in\overline\K/\Z}\frac{\mathrm{dres}(f,\omega,k)}{(x-\alpha_{\omega})^k}\end{equation} for some arbitrary choice of representatives $\alpha_{\omega}\in\omega$. Conversely, for any $h$ of the form \eqref{eq:general-reduced-form}, Proposition~\ref{prop:dres-sum} immediately yields that $f-h$ is rationally summable. An equivalent characterization for $h\in\Kx$ to be a reduced form is for it to have polar dispersion~$0$ (see \cite[Props.~4 \& 6]{Abramov:1975}). By \eqref{eq:general-reduced-form}, knowing the $\mathrm{dres}(f,\omega,k)$ is thus ``the same'' as knowing some/all reduced forms $h$ of $f$. But discrete residues still serve as a very useful organizing principle and technical tool, for both theoretical and practical computations.

\begin{defn}\label{def:goal} For $f\in\K(x)$, a $\K$\emph{-rational system of discrete residues} of $f$ is a sequence of pairs $(B_1,D_1),\dots,(B_m,D_m)\in\K[x]\times\K[x]$ such that:
\begin{itemize}
    \item $f$ has no poles of order higher than $m$;
    \item for each $1\leq k\leq m$, the polynomial $B_k$ is nonzero, squarefree, shiftfree, and has $\mathrm{deg}(B_k)>\mathrm{deg}(D_k)$;
    \item if $1\leq k\leq m$ and $\omega\in\overline\K/\Z$ are such that no $\alpha\in\omega$ is a root of $B_k$, then $\mathrm{dres}(f,\omega,k)=0$; and
    \item if $1\leq k\leq m$ and $\alpha\in\overline\K$ are such that $B_k(\alpha)=0$, then \[\mathrm{dres}(f,\omega(\alpha),k)=D_k(\alpha).\]
\end{itemize}
\end{defn}

Note that we do not insist on the length $m$ to be precisely equal to the highest order of a pole of $f$. The notion of $\K$-rational system of discrete residues in Definition~\ref{def:goal} was not defined explicitly in \cite{arreche-sitaula:2024}, and it is somewhat less rigid than what we set out to compute in \cite[\S3]{arreche-sitaula:2024}. We find it useful to introduce this notion explicitly, and with this added flexibility. It may not seem immediately obvious to the non-expert that there always exists a \mbox{$\K$-rational} system of discrete residues for any $f\in\Kx$. Although it is not logically necessary to formally prove their existence here (since we shall even construct them explicitly in Algorithm~\ref{alg:dres}, thus they exist!), it may perhaps be useful to offer a more conceptual Galois-theoretic justification.

\begin{lem}\label{lem:goal}
    For any $f\in\K(x)$ there exists a $\K$-rational system of discrete residues of $f$ as in Definition~\ref{def:goal}.
\end{lem}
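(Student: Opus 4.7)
The plan is to build $(B_k,D_k)$ directly from the complete partial fraction decomposition \eqref{eq:partial-fraction-decomposition} of $f$ by Galois descent from $\overline\K$ to $\K$. Let $G:=\mathrm{Gal}(\overline\K/\K)$; since $\Z\subseteq\K$ is fixed pointwise, the $G$-action on $\overline\K$ descends to $\overline\K/\Z$. Applying any $\tau\in G$ coefficientwise to \eqref{eq:partial-fraction-decomposition} and invoking uniqueness yields $c_k(\tau(\alpha))=\tau(c_k(\alpha))$, and summing over orbits gives the equivariance
\[
\mathrm{dres}(f,\tau(\omega),k)=\tau\bigl(\mathrm{dres}(f,\omega,k)\bigr).
\]
In particular, the finite set $S_k:=\{\omega\in\overline\K/\Z\mid\mathrm{dres}(f,\omega,k)\neq 0\}$ is $G$-stable for each $k\in\N$.

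The technical heart of the argument is a lemma to the effect that each $\Z$-orbit meets every $G$-orbit in $\overline\K$ in at most one point: if $\tau\in G$ and $\alpha\in\overline\K$ satisfy $\tau(\alpha)=\alpha+n$ for some $n\in\Z$, then $\alpha$ and $\alpha+n$ share the same minimal polynomial $p(x)\in\K[x]$ of some degree $d\geq 1$, so $p(x)=p(x-n)$; comparing subleading coefficients forces $dn=0$, whence $n=0$. It follows that for any $\alpha_0\in\omega_0\in S_k$ the $G$-orbit $G\cdot\alpha_0\subseteq\overline\K$ projects bijectively onto $G\cdot\omega_0\subseteq\overline\K/\Z$.

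For each $G$-orbit $\mathcal{O}$ of $S_k$ I would pick one such $\alpha_0$ and let $A_k\subseteq\overline\K$ be the disjoint union of the resulting $G$-orbits. Then $B_k(x):=\prod_{\alpha\in A_k}(x-\alpha)$ lies in $\K[x]$ by $G$-stability, is squarefree because the roots are distinct, and is shiftfree by the key lemma above. The companion $D_k$ is the unique element of $\overline\K[x]$ of degree less than $\mathrm{deg}(B_k)$ interpolating the values $\mathrm{dres}(f,\omega(\alpha),k)$ at the roots $\alpha$ of $B_k$; a direct check using both equivariances already established shows $\tau(D_k)=D_k$ for every $\tau\in G$, so $D_k\in\K[x]$. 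Letting $m$ be the largest pole order of $f$ and padding with $(B_k,D_k)=(1,0)$ whenever $S_k$ is empty yields a sequence of length $m$ as required.

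The main obstacle is the minimal polynomial lemma in the second paragraph: without it, nothing would prevent two distinct representatives of a single $\Z$-orbit from landing in $A_k$, and the resulting $B_k$ would fail to be shiftfree. Everything else is standard Galois descent combined with Lagrange interpolation.
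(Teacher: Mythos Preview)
Your proof is correct and follows essentially the same Galois-descent strategy as the paper: establish the equivariance $\mathrm{dres}(f,\tau(\omega),k)=\tau(\mathrm{dres}(f,\omega,k))$, choose a $G$-stable system of orbit representatives to build $B_k$, and then recover $D_k\in\K[x]$ by Lagrange interpolation plus uniqueness. In fact your argument is more complete than the paper's, which simply asserts that one can choose representatives $\alpha_\omega$ with $\alpha_{\gamma(\omega)}=\gamma(\alpha_\omega)$ without justification---your minimal-polynomial lemma (that $p(x)=p(x-n)$ forces $n=0$ in characteristic zero) is precisely what is needed to guarantee that such a $G$-equivariant section of $\overline\K\twoheadrightarrow\overline\K/\Z$ exists over each $G$-orbit, and hence that $B_k$ is genuinely shiftfree.
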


\begin{proof}
    The Galois group $\Gamma:=\mathrm{Gal}\bigl(\overline\K(x)/\K(x)\bigr)$ is canonically identified with $\mathrm{Gal}\bigl(\overline\K/\K\bigr)$ via the restriction homomorphism $\gamma\mapsto \gamma|_{\overline\K}$. Since the action of $\Gamma$ on $\overline\K(x)$ commutes with the $\overline\K$-linear shift automorphism $\sigma:x\mapsto x+1$, we see that $\omega(\gamma(\alpha))=\gamma(\omega(\alpha))$ for every $\alpha\in\overline\K$ and $\gamma\in\Gamma$. Moreover, since $\gamma\in\Gamma$ acts trivially on $f\in\K(x)$, we see that the coefficients in the partial fraction decomposition \eqref{eq:parfrac-def} of $f$ also satisfy $c_k(\gamma(\alpha))=\gamma(c_k(\alpha))$ for every $\alpha\in\overline\K$, $k\in\N$, and $\gamma\in\Gamma$. It then follows from Definition~\ref{def:dres} that $\mathrm{dres}(f,\omega(\gamma(\alpha)),k)=\gamma\bigl(\mathrm{dres}(f,\omega(\alpha),k)\bigr)$ for each $\alpha\in\overline\K$, $k\in\N$, and $\gamma\in\Gamma$. Thus we can construct a monic, squarefree, shiftfree $B_k\in\overline\K[x]$ having precisely one root $\alpha_\omega$ in each orbit $\omega\in\overline\K/\Z$ such that \mbox{$\mathrm{dres}(f,\omega,k)\neq 0$}, and in such a way that $\alpha_{\gamma(\omega)}=\gamma(\alpha_\omega)$ for every $\gamma\in\Gamma$, which forces $B_k\in\K[x]$. There then exists a unique interpolation polynomial $D_k\in\overline\K[x]$ with $\mathrm{deg}(D_k)<\mathrm{deg}(B_k)$ such that $D_k(\alpha)=\mathrm{dres}(f,\omega(\alpha),k)$ for each root $\alpha$ of $B_k$. Since \mbox{$\gamma(D_k(\alpha))=D_k(\gamma(\alpha))$} for each such $\alpha$ and for every $\gamma\in\Gamma$, we must have $D_k\in\K[x]$.
\end{proof}

  Our goal is to compute efficiently a $\K$-rational system of discrete residues for any given $f\in\Kx$, in place of the discrete residues of $f$ as literally given in Definition~\ref{def:dres}. For such a sequence $(B_1,D_1),\dots,(B_m,D_m)$ as in our Definition~\ref{def:goal}, the polynomials $D_k$ encode all the discrete residues of $f$ of order $k$ simultaneously, since $D_k(\alpha)=\mathrm{dres}(f,\omega(\alpha),k)$ for each root $\alpha\in\overline\K$ of $B_k$, and moreover each $B_k$ contains in its set of zeros a complete and irredundant set of representatives $\alpha\in\omega$ for every orbit $\omega\in\overline\K/\Z$ such that $\mathrm{dres}(f,\omega,k)\neq 0$. Thus with such a set of data computed from $f$ we are able to write down explicitly a symbolic expression for the general theoretical reduced form $h$ given in \eqref{eq:general-reduced-form} of $f$ as in \eqref{eq:summation-problem}:
  \begin{equation}\label{eq:difference-rothstein-trager}
      h=\sum_{k=1}^m\sum_{\substack{\alpha\in \overline\K \ \text{s.t.} \\ B_k(\alpha)=0}} \frac{D_k(\alpha)}{(x-\alpha)^k}.
  \end{equation}

  \begin{rem}
     The expression in \eqref{eq:difference-rothstein-trager} is analogous to the inspirational result \cite[Thm.~1]{bronstein:1993} on symbolic partial fraction decompositions, and provides an answer to the question posed in the Remark at the end of \cite[\S6.2]{Paule:1995}, of what an analogue of the Rothstein-Trager method \cite{trager:1976} for rational integration might look like.
  \end{rem}

\section{Iterated Hermite reduction} \label{sec:hermite}

It is immediate from the Definition~\ref{def:dres} that the polynomial part $p$ of $f\in\Kx$ in \eqref{eq:partial-fraction-decomposition} is irrelevant, both for the study of summability as well as for the computation of discrete residues. So beginning with this section we restrict our attention to proper rational functions $f\in\Kx$.

Our first task is to reduce to the case where $f$ has squarefree denominator. In this section we describe how to compute $f_k\in \Kx$ for $k\in\N$ such that, relative to the theoretical partial fraction decomposition \eqref{eq:partial-fraction-decomposition} of $f$, we have \begin{equation} \label{eq:hermite-list-output-k}
    f_k=\sum_{\alpha\in\overline\K}\frac{c_k(\alpha)}{x-\alpha}.
\end{equation} Of course we will then have by Definition~\ref{def:dres}\begin{equation}
    \label{eq:hermite-list-dres-k}\mathrm{dres}(f,\omega,k)=\mathrm{dres}(f_k,\omega,1)
\end{equation} for every $\omega\in\overline\K/\Z$ and $k\in\N$.

Our computation of the $f_k\in\Kx$ satisfying \eqref{eq:hermite-list-output-k} is based on iterating classical so-called \emph{Hermite reduction} algorithms, originally developed in \cite{Ostrogradsky1845,Hermite1872} and for which we refer to the wonderful modern reference \cite[\S2.2,\S2.3]{bronstein-book}.

\begin{defn} \label{def:hermite} For proper $f\in\Kx$, the \emph{Hermite reduction} of \mbox{$f$ is} \[\mathtt{HermiteReduction}(f)=(g,h)\] where $g,h\in\Kx$ are the unique proper rational functions such that \[f=\frac{d}{dx}(g)+h\] and $h$ has squarefree denominator.    
\end{defn}

Algorithm~\ref{alg:hermite-list} computes the $f_k\in\Kx$ satisfying \eqref{eq:hermite-list-output-k} by applying Hermite reduction iteratively and scaling the intermediate outputs appropriately.

\begin{algorithm}
\caption{$\mathtt{HermiteList}$ procedure}\label{alg:hermite-list}

\begin{algorithmic}[1]

\Require A proper rational function $ 0\neq f\in\Kx$.
\Ensure A list $(f_1,\dots,f_m)$ of $f_k\in\Kx$ satisfying \eqref{eq:hermite-list-output-k}, such that $f_m\neq 0$ and $c_k(\alpha)=0$ for every $k>m$ and every $\alpha\in\overline\K$.\vspace{.1in}

\State Initialize loop: $m \gets 0$; $g \gets f$;
\While{$g \neq 0$}
\State {$(g, \hat{f}_{m+1}) \gets \mathtt{HermiteReduction}(g)$};
    \State $m \gets m+1$;
\EndWhile;
\State $f_k \gets (-1)^{k-1}(k-1)!\hat{f}_k$;\\
\Return $(f_1,\dots,f_m)$.
\end{algorithmic}
\end{algorithm}

\begin{lem} \label{lem:hermite-list}
    Algorithm~\ref{alg:hermite-list} is correct.
\end{lem}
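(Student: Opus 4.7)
The plan is to trace the behavior of the algorithm on the (theoretical) complete partial fraction decomposition of $f$ over $\overline\K$, and to verify by induction on the loop counter $m$ that the scaling factor $(-1)^{k-1}(k-1)!$ in the final step produces exactly the desired coefficient pattern \eqref{eq:hermite-list-output-k}.

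First I would observe the key formula $\frac{d}{dx}\bigl((x-\alpha)^{-(k-1)}\bigr) = -(k-1)(x-\alpha)^{-k}$ for $k\geq 2$, which implies that for any proper $g\in\overline\K(x)$ with partial fraction coefficients $c_k(\alpha)$, the rational function
\[
\tilde g := -\sum_{k\geq 2}\sum_{\alpha\in\overline\K}\frac{c_k(\alpha)}{(k-1)(x-\alpha)^{k-1}}
\]
satisfies $\frac{d}{dx}(\tilde g)=g-\sum_\alpha\frac{c_1(\alpha)}{x-\alpha}$. By uniqueness of Hermite reduction (Definition~\ref{def:hermite}), this identifies $\mathtt{HermiteReduction}(g)=(\tilde g, h)$ with $h=\sum_\alpha\frac{c_1(\alpha)}{x-\alpha}$. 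Thus one iteration of the loop does two concrete things on the partial fraction data: it records as $\hat f_{m+1}$ the simple-pole part of the current $g$, and it replaces $g$ by a new rational function whose order-$j$ partial fraction coefficient at $\alpha$ is $-c_{j+1}(\alpha)/j$.

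Next I would prove by induction on $m\geq 0$ that after $m$ passes through the loop, the current value of $g$ has partial fraction coefficients
\[
c^{(m)}_j(\alpha)=\frac{(-1)^m(j-1)!}{(j+m-1)!}\,c_{j+m}(\alpha)
\]
(with $c_{j+m}(\alpha)$ referring to the original decomposition of $f$). The base case $m=0$ is tautological, and the inductive step follows from the displayed transformation rule above. Specializing to $j=1$ yields $\hat f_{m+1}=\sum_\alpha\frac{(-1)^m c_{m+1}(\alpha)}{m!(x-\alpha)}$, so that $f_{m+1}=(-1)^m m!\,\hat f_{m+1}=\sum_\alpha\frac{c_{m+1}(\alpha)}{x-\alpha}$ as required by \eqref{eq:hermite-list-output-k}.

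Finally I would address termination and the postcondition $f_m\neq 0$. The maximum pole order of the intermediate $g$ drops by exactly one at each iteration (it equals the maximum pole order of the input $f$ minus $m$, by the coefficient formula above), so after finitely many iterations $g$ becomes zero and the loop exits. At the last iteration where $g\neq 0$, the input $g$ has only simple poles, Hermite reduction returns $(0,g)$, and $\hat f_m=g\neq 0$; moreover, for every $k$ strictly larger than this final $m$, the coefficients $c_k(\alpha)$ must all vanish, since otherwise the maximum pole order argument would force the loop to run longer. The main bookkeeping obstacle is just keeping the combinatorial factor $(-1)^m(j-1)!/(j+m-1)!$ straight through the induction; everything else is an immediate consequence of the uniqueness in Definition~\ref{def:hermite}.
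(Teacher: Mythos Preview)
Your proof is correct. Both you and the paper establish termination by noting that the maximal pole order drops by exactly one per iteration, but the correctness arguments diverge. The paper characterizes the target functions $f_k$ abstractly as the unique rational functions with squarefree denominators satisfying
\[
f=\sum_{k=1}^m\frac{(-1)^{k-1}}{(k-1)!}\,\frac{d^{k-1}}{dx^{k-1}}(f_k),
\]
and then verifies this identity for the algorithm's output via the telescoping sum $\sum_k\bigl(\frac{d^{k-1}g_{k-1}}{dx^{k-1}}-\frac{d^k g_k}{dx^k}\bigr)=g_0=f$. You instead track the partial fraction coefficients explicitly through each loop iteration, proving the closed-form formula $c_j^{(m)}(\alpha)=\frac{(-1)^m(j-1)!}{(j+m-1)!}c_{j+m}(\alpha)$ by induction and then reading off the claim at $j=1$. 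Your route is more concrete and gives slightly more information (the full intermediate coefficient pattern), while the paper's telescoping argument is shorter and avoids tracking combinatorial factors. One minor point worth making explicit in your write-up: your $\tilde g$ is constructed over $\overline\K$, so you are implicitly using that the Hermite reduction of $g\in\K(x)$ computed over $\K$ coincides with the one computed over $\overline\K$, which follows from uniqueness.
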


\begin{proof}
    Let us denote by $||f||= m\in\N$ denote the highest order of any pole of $ f\in\Kx$, i.e., the largest $m\in\N$ such that there exists $\alpha\in\overline\K$ with $c_m(\alpha)\neq 0$ in \eqref{eq:parfrac-def}. Note that $f_1,\dots,f_m\in\overline\K(x)$ defined by \eqref{eq:hermite-list-output-k} are uniquely determined by having squarefree denominator and satisfying \begin{equation}\label{eq:hermite-list-condition}f=\sum_{k=1}^m\frac{(-1)^{k-1}}{(k-1)!}\frac{d^{k-1}}{dx^{k-1}}(f_k).\end{equation} Defining inductively $g_0:=f$ and \begin{gather}\notag (g_{k},\hat{f}_{k}):=\mathtt{HermiteReduction}(g_{k-1}) \\ \Longleftrightarrow\qquad g_{k-1} = \smash{\frac{d}{dx}}(g_{k})+\hat{f}_{k}\label{eq:hermite-list-induction}\end{gather} for $k\in\N$ as in Algorithm~\ref{alg:hermite-list}, we obtain by construction that all $g_k,\hat{f}_k\in\Kx$ and every $\hat{f}_k$ has squarefree denominator. Moreover, $||g_k||=||g_{k-1}||-1=m-k$, and therefore the algorithm terminates with $g_m=0$. Moreover, it follows from \eqref{eq:hermite-list-induction} that \[\sum_{k=1}^m\frac{d^{k-1}\hat{f}_k}{dx^{k-1}}=\sum_{k=1}^m\left(\frac{d^{k-1}g_{k-1}}{dx^{k-1}} - \frac{d^{k}g_k}{dx^{k}}\right)=g_0-\frac{d^mg_m}{dx^m}=f.\] Therefore the elements $(-1)^{k-1}(k-1)!\hat{f}_k$ are squarefree and satisfy \eqref{eq:hermite-list-condition}, so they agree with the $f_k\in\overline\K(x)$ satisfying \eqref{eq:hermite-list-output-k}.
\end{proof}

\begin{rem}\label{rem:hermite-list} As we mentioned in the introduction, as well as in \cite{arreche-sitaula:2024}, we do not expect Algorithm~\ref{alg:hermite-list} to be surprising to the experts. And yet it is a mystery to us why this trick is not used more widely since being originally suggested in \cite[\S5]{Horowitz1971}. We expect the theoretical cost of computing $\mathtt{HermiteList}(f)$ iteratively as in Algorithm~\ref{alg:hermite-list} to be asymptotically the same as that of computing $\mathtt{HermiteReduction}(f)$ only once, as a function of the degree of the denominator. This might seem counterintuitive, since the former is defined by applying the latter as many times as the highest order $m$ of any pole of $f$. But the size of the successive inputs in the loop decreases so quickly that the cost of the first step essentially dominates the added cost of the remaining steps put together. This conclusion is already drawn in \cite[\S5]{Horowitz1971} regarding the computational cost of computing iterated integrals of rational functions.
\end{rem}

\section{Simple Reduction}\label{sec:simple-reduction}

The results of the previous section allow us to further restrict our attention to proper rational functions $f\in\Kx$ with simple poles, which we write uniquely as $f=\frac{a}{b}$ with $a,b\in\K[x]$ such that $b$ is monic and squarefree, and such that $\mathrm{deg}(a)<\mathrm{deg}(b)$.

Our Algorithm~\ref{alg:simple-reduction} below computes a reduced form $\bar{f}\in\Kx$ such that $\mathrm{dres}(f,\omega,1)=\mathrm{dres}(\bar{f},\omega,1)$ for every $\omega\in\overline\K/\Z$ and such that $\bar{f}$ has squarefree as well as shiftfree denominator. As we mentioned already in the introduction, many algorithms have been developed, beginning with \cite{Abramov:1971}, that can compute such a reduced form, even without assuming $f$ has only simple poles. However, having earned the right to restrict our attention to the case of only simple poles by means of Algorithm~\ref{alg:hermite-list}, the Algorithm~\ref{alg:simple-reduction} seems simpler and more amenable to useful modifications (cf.~Algorithm~\ref{alg:multi-reduction}) the alternatives in the literature. But like every algorithm to compute reduced forms, Algorithm~\ref{alg:simple-reduction} still requires the computation of the following set of integers, originally defined in \cite{Abramov:1971}.

\begin{defn} \label{def:shift-set}
    For $0\neq b\in\K[x]$, the \emph{autodispersion set} of $b$ is \[\mathtt{ShiftSet}(b)=\{\ell\in\N \ | \ \mathrm{deg}(\mathrm{gcd}(b,\sigma^\ell(b)))\geq 1\}.\]
\end{defn}

 Algorithm~\ref{alg:shift-set} for computing $\mathtt{ShiftSet}(b)$ is based on a fundamental observation already made in \cite[p.~326]{Abramov:1971}, but with minor modifications to optimize the computations. Since $b$ and its squarefree form $\tilde{b}=b/\mathrm{gcd}\bigl(b,\frac{d}{dx}b\bigr)$ have the same roots, we can assume $b$ is squarefree without loss of applicability.

\begin{algorithm}
\caption{$\mathtt{ShiftSet}$ procedure}\label{alg:shift-set}
\begin{algorithmic}[1]
\Require  A squarefree polynomial $0\neq b\in\K[x]$. 
\Ensure  $\mathtt{ShiftSet}(b)$.\vspace{.1in}
\If{$\mathrm{deg}(b) \leq 1$} {$S\gets\emptyset$;}
\Else \State{$R(z) \gets \mathrm{Resultant}_x(b(x), b(x+z))$;}\vspace{.1in}
\State $\tilde{R}(z)\gets \dfrac{R(z)}{z\cdot \mathrm{gcd}_z\left(R(z), \frac{dR}{dz}(z)\right)};$ \Comment{Exact division.} \vspace{.1in}
\State $T(z)\gets \tilde{R}(z^{\frac{1}{2}})$; \Comment{
     $\tilde{R}(z)$ is even; slight speed-up.}
\State{$S \gets \{\ell\in\mathbb{N} \ | \ T(\ell^2)=0\}$;}
\EndIf;\\
\Return $S$.
\end{algorithmic}
\end{algorithm}
\begin{lem}\label{lem:shift-set}
    Algorithm~\ref{alg:shift-set} is correct.
\end{lem}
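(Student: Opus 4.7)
The plan is to recognize the resultant $R(z)$ as a complete encoding of all pairwise differences of roots of $b$, and then to strip away redundancies so that what remains pinpoints exactly which positive integers arise as such differences.

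I would first dispose of the trivial case $\mathrm{deg}(b)\leq 1$: then $b$ has at most one root in $\overline\K$, no positive shift can move it to another root, and so $\mathtt{ShiftSet}(b)=\emptyset$, matching the first branch of the algorithm. For the main case, writing $b=\prod_{i=1}^n(x-\alpha_i)\in\overline\K[x]$ with $n=\mathrm{deg}(b)\geq 2$ and the $\alpha_i$ pairwise distinct (by squarefreeness), the standard resultant-as-product-of-evaluations formula gives
\[
R(z)=\prod_{i=1}^n b(\alpha_i+z)=\prod_{i,j=1}^n\bigl(z-(\alpha_j-\alpha_i)\bigr).
\]
Consequently, for $\ell\in\N$ the condition $\ell\in\mathtt{ShiftSet}(b)$ is equivalent to $R(\ell)=0$, which in turn is equivalent to $\ell=\alpha_j-\alpha_i$ for some indices $i,j$.

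Next I would identify $\tilde R(z)$ as the polynomial $\prod_{\delta\neq 0}(z-\delta)$, with $\delta$ ranging over the \emph{distinct} nonzero differences $\alpha_j-\alpha_i$. Using that $z=0$ occurs in $R(z)$ with multiplicity exactly $n$ (coming from the diagonal pairs $i=j$), while each nonzero difference $\delta$ contributes some multiplicity $m_\delta\geq 1$, the textbook identity $R(z)/\mathrm{gcd}_z\bigl(R,\tfrac{dR}{dz}\bigr)=z\cdot\prod_{\delta\neq 0}(z-\delta)$ together with the further exact division by $z$ prescribed by the algorithm produces the claimed form of $\tilde R$. I would then note that the set of nonzero differences is closed under negation (since $\alpha_j-\alpha_i$ and $\alpha_i-\alpha_j$ are simultaneously present), so $\tilde R$ is an even polynomial and hence $T(z):=\tilde R(z^{1/2})$ is a well-defined element of $\K[z]$ satisfying $\tilde R(z)=T(z^2)$. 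The proof concludes with the chain of equivalences: for $\ell\in\N$, $T(\ell^2)=0$ iff $\tilde R(\ell)=0$ iff $\ell$ is a nonzero difference of two roots of $b$ iff $\ell\in\mathtt{ShiftSet}(b)$.

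The main (mildly delicate) step is pinning down that $z=0$ has multiplicity \emph{exactly} $n$ in $R(z)$: this is where the squarefreeness hypothesis on $b$ is used essentially, since without it the exact division by $z$ prescribed in the algorithm could fail. Everything else is a mechanical unwinding of the resultant formula together with the standard squarefree-part construction.
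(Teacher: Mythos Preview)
Your argument is correct and follows essentially the same route as the paper's: both identify $\mathtt{ShiftSet}(b)$ with the positive integer roots of $R(z)$, pass to the squarefree part, divide out the forced factor of $z$, and exploit the $\delta\leftrightarrow -\delta$ symmetry to get evenness. Your version is more explicit, writing out $R(z)=\prod_{i,j}(z-(\alpha_j-\alpha_i))$ and identifying $\tilde R(z)$ concretely as $\prod_{\delta\neq 0}(z-\delta)$, whereas the paper simply cites Abramov for the first step and argues the evenness at the level of the root set.

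One small correction to your closing remark: the squarefreeness hypothesis is \emph{not} what makes the exact division by $z$ succeed. Since $R(0)=\mathrm{Resultant}_x(b(x),b(x))=0$ for any $b$ of degree $\geq 1$, the factor $z$ always divides $R(z)$ and hence its squarefree part, so the division in step~4 always goes through. Squarefreeness is assumed in the algorithm only for efficiency (as the paper notes just before the algorithm), and in your argument it simply ensures the $\alpha_i$ are distinct, which makes the bookkeeping cleaner but is not essential for correctness.
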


\begin{proof}
    As pointed out in \cite[p.~326]{Abramov:1971}, $\mathtt{ShiftSet}(b)$ is the set of positive integer roots of the resultant $R(z)\in\K[z]$ defined in Algorithm~\ref{alg:shift-set}, which is the same as the set of positive integer roots of the square-free part $R(z)/\mathrm{gcd}_z\bigl(R(z),\frac{dR}{dz}(z)\bigr)$. It is clear that $R(0)=0$, and since we do not care for this root, we are now looking for positive integer roots of the polynomial $\tilde{R}(z)$ defined in Algorithm~\ref{alg:shift-set}. It follows from the definition of $R(z)$ that $R(\ell)=0$ if and only if $R(-\ell)=0$ for every $\ell\in\overline\K$ (not just for $\ell\in\Z$), and we see that this property is inherited by $\tilde{R}(z)$. Since $z\nmid \tilde{R}(z)$, the even polynomial $\tilde{R}(z)=T(z^2)$ for a unique $T(z)\in \K[z]$, whose degree is evidently half of that of $\tilde{R}(z)$. 
\end{proof}

\begin{rem} \label{rem:shift-set}
    The role of the assumption that $\mathbb{K}$ be such that it is feasible to compute integer solutions to polynomial equations with $\K$-coefficients, as in \cite{Abramov:1971}, is to ensure that we can compute $\mathtt{ShiftSet}(b)$ as in Algorithm~\ref{alg:shift-set}. The role of the alternative assumption that $\K$ be such that it is feasible to compute irreducible factorizations of polynomials in $\K[x]$ is to ensure that we can apply the much more efficient (when available) algorithm of \cite{Man:1994}. We note that in \cite[\S6]{Gerhard:2003} a more efficient variant of the main ideas of \cite{Man:1994} is developed, which works for $b\in\Z[x]$.
\end{rem}

The previous Algorithm~\ref{alg:shift-set} to compute $\mathtt{ShiftSet}$ is called by the following Algorithm~\ref{alg:simple-reduction} to compute reduced forms of rational functions with squarefree denominators.

\begin{algorithm}
\caption{$\mathtt{SimpleReduction}$ procedure}\label{alg:simple-reduction}
\begin{algorithmic}[1]
\Require  A proper rational function $f\in\Kx$ with squarefree denominator.
\Ensure  A proper rational function $\bar{f}\in\Kx$, with squarefree and shiftfree denominator, such that $\mathrm{dres}(f,\omega,1)=\mathrm{dres}(\bar{f},\omega,1)$ for all \mbox{$\omega\in\overline\K/\Z$}.\vspace{.1in}
\State $b\gets \mathrm{denom}(f)$;
\State $S\gets\mathrm{ShiftSet}(b)$;
\If{$S =\emptyset $} \State {$\bar{f}\gets f$;}
\Else \For{$\ell\in S$}
\State $g_\ell\gets \mathrm{gcd}(b,\sigma^{-\ell}(b))$;
\EndFor;
\State $G\gets  \mathrm{lcm}(g_\ell \ | \ \ell\in S)$;
\State $b_0\gets b/G$; \Comment{Exact division.} 
 \For{ $\ell\in S$} 
 \State $b_\ell \gets gcd( \sigma^{-\ell}(b_0),b)$; 
\EndFor;
\State{$N\gets\{0\}\cup\{\ell\in S \ | \ \mathrm{deg}(b_\ell)\geq 1\}$}; 
\State{$(a_\ell \ | \ \ell\in N) \gets \mathtt{ParFrac}(f;b_\ell \ | \ \ell\in N);$}
\State{$\bar{f}\gets \displaystyle\sum_{\ell\in N}\sigma^\ell\left(\frac{a_\ell}{b_\ell}\right);$}
\EndIf;\\
\Return $\bar{f}$. 
\end{algorithmic}
\end{algorithm}

\begin{prop}\label{prop:simple-reduction}
Algorithm~\ref{alg:simple-reduction} is correct.
\end{prop}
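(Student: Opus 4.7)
The plan is to prove correctness via four claims: (i) $b_0$ is squarefree and shiftfree; (ii) the $\{b_\ell\}_{\ell \in N}$ are pairwise coprime with product equal to $b$, so the partial-fraction call is well-defined and yields $f = \sum_{\ell \in N} a_\ell/b_\ell$; (iii) the denominator of $\bar f$ divides $b_0$; and (iv) $f - \bar f$ is rationally summable. Given these, $\bar f$ has the required squarefree and shiftfree denominator, and Proposition~\ref{prop:dres-sum} applied to $f - \bar f$ yields the required equality of discrete residues at order $k = 1$ (and in fact at all orders, though $f - \bar f$ may have poles of higher order than $1$). The branch $S = \emptyset$ is trivial, since then $b$ is already shiftfree and $\bar f := f$ meets the output specification.

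The combinatorial core of the argument is to track the roots of $b$ modulo the $\Z$-action. For each root $\alpha$ of $b$, the intersection of the orbit $\omega(\alpha)$ with the root set of $b$ is a finite subset of the $\Z$-torsor $\omega(\alpha)$; let $\alpha_0$ denote its unique \emph{leftmost} element, characterized by the property that $b(\alpha_0 - \ell) \neq 0$ for all $\ell \in \N$. The roots of $g_\ell = \mathrm{gcd}(b, \sigma^{-\ell}(b))$ are exactly the roots $\beta$ of $b$ such that $\beta - \ell$ is also a root of $b$, and any such pair forces $\ell \in \mathtt{ShiftSet}(b) = S$; hence the roots of $G = \mathrm{lcm}(g_\ell : \ell \in S)$ are exactly the non-leftmost roots of $b$. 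Since $b$ is squarefree, the roots of $b_0 = b/G$ are therefore precisely the leftmost roots in each orbit, which proves (i): $b_0$ divides the squarefree $b$, and two leftmost roots can never differ by a positive integer. For (ii), every root $\alpha$ of $b$ can be written uniquely as $\alpha = \alpha_0 + \ell$ with $\alpha_0$ a leftmost root and $\ell \geq 0$ (we have $\alpha - \alpha_0 \geq 0$ by the leftmostness of $\alpha_0$), and one checks directly that then $\alpha$ is a root of $b_\ell = \mathrm{gcd}(\sigma^{-\ell}(b_0), b)$ and that $\ell \in N$ (for $\ell > 0$, both $\alpha_0$ and $\alpha_0 + \ell$ are roots of $b$, so $\ell \in S$ and $\mathrm{deg}(b_\ell) \geq 1$). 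Uniqueness of this assignment $\alpha \mapsto \ell$ is forced by the shiftfreeness of $b_0$.

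For (iii), each root of $b_\ell$ has the form $\alpha_0 + \ell$ with $\alpha_0$ a root of $b_0$, so $\sigma^\ell(b_\ell)$ divides $b_0$, and hence so does the denominator of $\bar f = \sum_{\ell \in N} \sigma^\ell(a_\ell/b_\ell)$. For (iv), the telescoping identity $\sigma^\ell - \mathrm{id} = \Delta \circ (\mathrm{id} + \sigma + \cdots + \sigma^{\ell-1})$ expresses each summand $(\mathrm{id} - \sigma^\ell)(a_\ell/b_\ell)$ of the difference $f - \bar f = \sum_{\ell \in N}(\mathrm{id} - \sigma^\ell)(a_\ell/b_\ell)$ as $-\Delta$ applied to a rational function; hence $f - \bar f$ lies in the image of $\Delta$ and Proposition~\ref{prop:dres-sum} yields the desired equality of discrete residues. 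I expect the main obstacle to be the bookkeeping in the previous paragraph, especially the identification of the roots of $b_0$ with the orbit-leftmost roots of $b$ and the verification that the $b_\ell$ partition the roots of $b$; once that combinatorial picture is clear, claims (iii) and (iv) reduce to one-line verifications.
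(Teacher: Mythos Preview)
Your proof is correct and follows essentially the same approach as the paper: identify the roots of $b_0$ as the orbit-minimal (the paper says ``initial'') roots of $b$, verify that the $b_\ell$ partition the roots of $b$, observe that $\sigma^\ell(b_\ell)\mid b_0$, and use the telescoping identity to show $f-\bar f$ is summable. One trivial slip: your parenthetical ``$f-\bar f$ may have poles of higher order than $1$'' is false, since both $f$ and $\bar f$ have squarefree denominators---but this does not affect the argument.
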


\begin{proof}
    As in Algorithm~\ref{alg:simple-reduction}, let us denote by $b$ the denominator of $f$ and set $S=\mathtt{ShiftSet}(b)$. Then indeed if $S=\emptyset$, $\mathrm{pdisp}(f)=0$, so $f$ is already reduced and there is nothing to compute. Thus let us assume from now on that $S\neq\emptyset$, and let us consider roots of polynomials in $\overline\K$. For each $\ell\in S$, the roots of $g_\ell:=\mathrm{gcd}(b,\sigma^{-\ell}(b))$ are those roots $\alpha$ of $b$ such that $\alpha-\ell$ is also a root of $b$. Therefore the roots of $G=\mathrm{lcm}(g_\ell:\ell\in S)$ are those roots $\alpha$ of $b$ such that $\alpha-\ell$ is also a root of $b$ for some $\ell\in\N$ (because all possible such $\ell$ belong to $S$, by the Definition~\ref{def:shift-set} of $S$). It follows that the roots of $b_0:=b/G$ are those roots $\alpha$ of $b$ such that $\alpha-\ell$ is \emph{not} a root of $b$ for any $\ell\in\N$. In particular, $\mathrm{disp}(b_0)=0$. We call $b_0$ the \emph{divisor of initial roots}.
    
    Now the roots of $b_\ell:=\mathrm{gcd}(\sigma^{-\ell}(b_0),b)$ are those roots $\alpha$ of $b$ such that $\alpha-\ell$ is a root of $b_0$, i.e., the roots of $b$ which are precisely $\ell$ shifts away from the initial root in their respective $\mathbb{Z}$-orbits. It may happen that $b_\ell=1$ for some $\ell\in S$, because even though each $\ell\in S$ is the difference between two roots of $b$, it might be that no such pair of roots of $b$ involves any initial roots of $b_0$. Writing $N:=\{0\}\cup \{\ell\in S \ | \ \mathrm{deg}(b_\ell)\geq 1\}$, it is clear that \begin{equation}\label{eq:simple-reduction-factorization}\prod_{\ell\in N}b_\ell=b\qquad \text{and} \qquad \mathrm{gcd}(b_\ell,b_j)=1 \quad \text{for}\quad \ell\neq j.\end{equation} Therefore we may uniquely decompose $f$ into partial fractions as in \eqref{eq:parfrac-def} with respect to the factorization \eqref{eq:simple-reduction-factorization}, as called by Algorithm~\ref{alg:simple-reduction}: \[f=\sum_{\ell\in N} \frac{a_\ell}{b_\ell},\qquad \text{and set} \qquad \bar{f}:= \displaystyle\sum_{\ell\in N}\sigma^\ell\left(\frac{a_\ell}{b_\ell}\right).\]
    
    Now this $\bar{f}$ is a sum of proper rational functions with squarefree denominators, whence $\bar{f}$ also is proper with squarefree denominator.
    Since $\sigma^\ell(b_\ell)=\mathrm{gcd}(b_0,\sigma^\ell(b))$ is a factor of $b_0$ for each $\ell\in N$ and $\mathrm{disp}(b_0)=0$, we conclude that $\mathrm{pdisp}(\bar{f})=0$. Finally, for each $\ell\in N-\{0\}$ we see that \[\frac{a_\ell}{b_\ell}-\sigma^\ell\left(\frac{a_\ell}{b_\ell}\right)=\sum_{i=0}^{\ell-1}\sigma^i\left(\frac{a_\ell}{b_\ell}-\sigma\left(\frac{a_\ell}{b_\ell}\right)\right),\] whence $f-\bar{f}$ is a sum of rationally summable elements, and is therefore itself rationally summable. By Proposition~\ref{prop:dres-sum}, this implies $\mathrm{dres}(f-\bar{f},\omega,k)=0$ for every $\omega\in\overline\K/\Z$ and every $k\in\N$.
\end{proof}

\begin{rem}\label{rem:simple-reduction}
  As we stated in the introduction, Algorithm~\ref{alg:simple-reduction} strikes us as being conceptually similar to the one already developed in \cite[\S5]{Gerhard:2003}, but its description is made simpler by our restriction to rational functions with simple poles only. Although having a procedure that is easier for humans to read is not always computationally advantageous, it is so in this case, because the relative simplicity of Algorithm~\ref{alg:simple-reduction} will enable us in \S\ref{sec:extensions} to easily develop modifications that extend its reach beyond just rational summability.  
\end{rem}

\section{Computation of discrete residues}

Now we wish to put together the algorithms presented in the earlier sections to compute symbolically all the discrete residues of an arbitrary proper $f\in\Kx$, in the sense described in \S\ref{sec:dres-sum}. In order to do this, we first recall the following result describing the sense in which we compute classical residues symbolically by means of an auxiliary polynomial, and its short proof which explains how to actually compute this polynomial in practice.

\begin{lem}[\protect{\cite[Lem.~5.1]{trager:1976}}]\label{lem:trager}
    Let $f=\frac{a}{b}\in\Kx$ such that $a,b\in\K[x]$ satisfy $a\neq 0$, $\mathrm{deg}(a)<\mathrm{deg}(b)$, $\mathrm{gcd}(a,b)=1$, and $b$ is squarefree. Then there exists a unique polynomial $0\neq r\in\K[x]$ such that $\mathrm{deg}(r)<\mathrm{deg}(b)$ and \[f=\sum_{\{\alpha\in\overline\K \ | \ b(\alpha)=0\}}\frac{r(\alpha)}{x-\alpha}.\]
\end{lem}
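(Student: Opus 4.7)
The plan is to exploit the classical partial fraction decomposition of $f$ over $\overline\K$ and then use the hypothesis that $b$ is squarefree to invert $\tfrac{d}{dx}b$ modulo $b$ in $\K[x]$.

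First I would write down the complete partial fraction decomposition of $f$ over $\overline\K$. Since $b$ is squarefree, its $\deg(b)$ roots $\alpha\in\overline\K$ are distinct and simple, so there exist unique $c_\alpha\in\overline\K$ such that
\[
f=\sum_{\{\alpha\in\overline\K\,|\,b(\alpha)=0\}}\frac{c_\alpha}{x-\alpha}.
\]
The standard formula for the residue at a simple pole gives $c_\alpha=a(\alpha)/b'(\alpha)$, where $b'=\tfrac{d}{dx}b$. This computation uses only that $(x-\alpha)\mid b$ with multiplicity one, so $b'(\alpha)\neq 0$, together with L'H\^opital / the quotient formula.

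Next I would construct the polynomial $r\in\K[x]$ explicitly. Because $b$ is squarefree, $\mathrm{gcd}(b,b')=1$, so $b'$ is a unit modulo $b$ in $\K[x]$: via the extended Euclidean algorithm there exists a unique $u\in\K[x]$ with $\mathrm{deg}(u)<\mathrm{deg}(b)$ and $u\cdot b'\equiv 1\pmod{b}$. Define $r:=(a\cdot u)\bmod b\in\K[x]$, so that $\mathrm{deg}(r)<\mathrm{deg}(b)$ by construction. For every root $\alpha$ of $b$ we have
\[
r(\alpha)=a(\alpha)u(\alpha)=\frac{a(\alpha)}{b'(\alpha)}=c_\alpha,
\]
which gives the desired representation. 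Note $r\neq 0$ since otherwise all $c_\alpha$ would vanish, forcing $f=0$ and contradicting $a\neq 0$.

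Finally, uniqueness follows from a standard interpolation argument: if $r_1,r_2\in\K[x]$ both have degree $<\mathrm{deg}(b)$ and agree with $\alpha\mapsto c_\alpha$ on the $\mathrm{deg}(b)$ distinct roots of $b$ in $\overline\K$, then $r_1-r_2$ is a polynomial of degree $<\mathrm{deg}(b)$ with $\mathrm{deg}(b)$ roots, hence identically zero. There is no real obstacle here; the only point requiring care is verifying that the $r$ so defined actually lies in $\K[x]$ and not merely in $\overline\K[x]$, which is automatic because the construction $r=(au)\bmod b$ is performed entirely within $\K[x]$.
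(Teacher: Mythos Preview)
Your proof is correct and follows essentially the same approach as the paper: both compute the residues as $a(\alpha)/b'(\alpha)$ and construct $r$ via the extended Euclidean algorithm so that $r\cdot b'\equiv a\pmod{b}$, with your version simply making the intermediate inverse $u$ explicit and spelling out the uniqueness and nonvanishing arguments that the paper leaves implicit.
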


\begin{proof}
    Since the set of poles of $f$ (all simple poles) is the set of roots of $b$, we know the classical first-order residue $c_1(\alpha)$ of $f$ at each $\alpha\in\overline\K$ such that $b(\alpha)=0$ satisfies $0\neq c_1(\alpha)=a(\alpha)/\frac{db}{dx}(\alpha).$ Using the extended Euclidean algorithm we can compute the unique $0\neq r$ in $\K[x]$ such that $\mathrm{deg}(r)<\mathrm{deg}(b)$ and $r\cdot\frac{d}{dx}(b)\equiv a \pmod{b}.$
    \end{proof}

For $f\in\Kx$ satisfying the hypotheses of Lemma~\ref{lem:trager}, we denote \begin{equation}\label{eq:first-residues-def}\mathtt{FirstResidues}(f):=(b,r),\end{equation} where $b,r\in\K[x]$ are also as in the notation of Lemma~\ref{lem:trager}. We also define $\mathtt{FirstResidues}(0):=(1,0)$, for convenience. With this, we can now describe the following simple Algorithm~\ref{alg:dres} to compute a symbolic representation of the discrete residues of $f$.

\begin{algorithm}
\caption{$\mathtt{DiscreteResidues}$ procedure}\label{alg:dres}
\begin{algorithmic}[1]
\Require  A proper rational function $0\neq f\in\Kx$.
\Ensure $\bigl((B_1,D_1),\dots,(B_m,D_m) \bigr)$, a $\K$-rational system of discrete residues of $f$ as in Definition~\ref{def:goal}. \vspace{.1in}

\State $(f_1,\dots,f_m)\gets \mathtt{HermiteList}(f)$;
\For{$k=1..m$}
\State $\bar{f}_k\gets\mathtt{SimpleReduction}(f_k)$;
\State $(B_k,D_k)\gets \mathtt{FirstResidues}(\bar{f}_k)$;
\EndFor;\\
\Return $\bigl((B_1,D_1),\dots,(B_m,D_m) \bigr)$. 
\end{algorithmic}
\end{algorithm}

\begin{thm}    
\label{thm:dres}
    Algorithm~\ref{alg:dres} is correct.
\end{thm}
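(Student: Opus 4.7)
The plan is to verify the four bullet points of Definition~\ref{def:goal} in turn by tracing the output of each subroutine in Algorithm~\ref{alg:dres} against the results already established in \S\ref{sec:hermite}, \S\ref{sec:simple-reduction}, and Lemma~\ref{lem:trager}. The overall strategy is that each step of the pipeline preserves discrete residue information in a controlled way: $\mathtt{HermiteList}$ peels off the order-$k$ polar data of $f$ into a rational function $f_k$ with only simple poles, $\mathtt{SimpleReduction}$ replaces $f_k$ by an equivalent rational function $\bar{f}_k$ with squarefree and shiftfree denominator without altering its first-order discrete residues, and $\mathtt{FirstResidues}$ packages the classical residues of $\bar{f}_k$ into the pair $(B_k,D_k)$.

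More concretely, I would first invoke Lemma~\ref{lem:hermite-list} to see that the $f_k$ computed at line~1 satisfy the identity \eqref{eq:hermite-list-output-k}, so that $\mathrm{dres}(f,\omega,k)=\mathrm{dres}(f_k,\omega,1)$ for each $k$ and each $\omega\in\overline\K/\Z$, and moreover $c_k(\alpha)=0$ for all $k>m$ and $\alpha\in\overline\K$, which handles the first bullet of Definition~\ref{def:goal}. Then I would apply Proposition~\ref{prop:simple-reduction} to conclude that each $\bar{f}_k$ is a proper rational function in $\Kx$ with squarefree and shiftfree denominator and with the same first-order discrete residues as $f_k$, so in particular $\mathrm{dres}(f,\omega,k)=\mathrm{dres}(\bar{f}_k,\omega,1)$ for all $\omega$ and $k$. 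Writing $\bar{f}_k=a_k/B_k$ in lowest terms with $B_k$ monic, Lemma~\ref{lem:trager} applied to $\bar{f}_k$ yields the polynomial $D_k\in\K[x]$ with $\mathrm{deg}(D_k)<\mathrm{deg}(B_k)$ such that
\[\bar{f}_k=\sum_{\{\alpha\in\overline\K \ | \ B_k(\alpha)=0\}}\frac{D_k(\alpha)}{x-\alpha},\]
which establishes the degree condition in the second bullet (together with the squarefree/shiftfree conditions already provided by Proposition~\ref{prop:simple-reduction}); the boundary case $\bar{f}_k=0$ is handled by the convention $\mathtt{FirstResidues}(0)=(1,0)$, giving $B_k=1$ which is vacuously squarefree and shiftfree.

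Finally, the last two bullets both reduce to the single observation that, because $B_k$ is shiftfree, each orbit $\omega\in\overline\K/\Z$ contains \emph{at most} one root of $B_k$. For an orbit $\omega$ containing no root of $B_k$, the displayed expression for $\bar{f}_k$ above shows that $\mathrm{dres}(\bar{f}_k,\omega,1)=0$, hence $\mathrm{dres}(f,\omega,k)=0$. For an orbit $\omega$ containing a (necessarily unique) root $\alpha$ of $B_k$, the same expression gives $\mathrm{dres}(\bar{f}_k,\omega,1)=D_k(\alpha)$, hence $\mathrm{dres}(f,\omega(\alpha),k)=D_k(\alpha)$, as required.

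The substantive content all lives in the previously-proved lemmas; the only real step of the proof here is the shiftfreeness observation in the last paragraph, which is what ultimately entitles us to identify a discrete residue (an orbit sum) with a single classical first-order residue. There is no real obstacle, only bookkeeping: I would be careful to explicitly note that $B_k$ being squarefree and shiftfree is what makes the orbit sum defining $\mathrm{dres}(\bar{f}_k,\omega,1)$ collapse to a single term, since this is the crux of why Algorithm~\ref{alg:dres} produces the correct output and not merely a rearrangement of it.
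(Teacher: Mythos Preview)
Your proposal is correct and follows essentially the same approach as the paper's proof: both invoke Lemma~\ref{lem:hermite-list} for the pole-order bound, Proposition~\ref{prop:simple-reduction} for the squarefree/shiftfree denominator of $\bar{f}_k$ and preservation of first-order discrete residues, Lemma~\ref{lem:trager} for the residue polynomial $D_k$, the convention $\mathtt{FirstResidues}(0)=(1,0)$ for the degenerate case, and the shiftfreeness of $B_k$ to collapse the orbit sum to a single classical residue. The only cosmetic difference is that the paper organizes the argument by splitting on whether $\bar{f}_k=0$, whereas you organize it by verifying the four bullets of Definition~\ref{def:goal} in order.
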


\begin{proof}
    It follows from the correctness of Algorithm~\ref{alg:hermite-list} proved in Lemma~\ref{lem:hermite-list} that $f$ has no poles of order greater than $m$, whence by Definition~\ref{def:dres} every non-zero discrete residue of $f$ has order at most $m$. Consider now $\bar{f}_k:=\mathtt{SimpleReduction}(f_k)$, which by the correctness of Algorithm~\ref{alg:simple-reduction} is such that $f_k-\bar{f}_k$ is summable. We prove the correctness of Algorithm~\ref{alg:dres} for each $k=1,\dots,m$ depending on whether $\bar{f}_k=0$ or not.

    In case $\bar{f}_k=0$, Algorithm~\ref{alg:dres} produces $(B_k,D_k)=(1,0)$. In this case we also know that $f_k$ is summable, and therefore by \eqref{eq:hermite-list-dres-k} $\mathrm{dres}(f_k,\omega,1)=\mathrm{dres}(f,\omega,k)=0$ for every $\omega\in\overline\K/\Z$. Thus the output of Algorithm~\ref{alg:dres} is correct in this case.

    Suppose now that $\bar{f}_k\neq 0$. It follows from the definition of $(B_k,D_k):=\mathtt{FirstResidues}(\bar{f}_k)$ as in \eqref{eq:first-residues-def} that $B_k$ is the denominator of the proper rational function $\bar{f}_k$, and therefore $B_k$ is non-constant, squarefree, and has $\mathrm{disp}(B_k)=0$, by the correctness of Algorithm~\ref{alg:simple-reduction} proved in Lemma~\ref{prop:simple-reduction}. Let us denote by $\bar{c}_k(\alpha)$ the classical first order residue of $\bar{f}_k$ at each $\alpha\in\overline\K$ (note that $\bar{f}_k$ has only simple poles, so there are no other residues). We obtain from Lemma~\ref{lem:trager} that $D_k\neq 0$, $\mathrm{deg}(D_k)<\mathrm{deg}(B_k)$, and $D_k(\alpha)=\bar{c}_k(\alpha)$ for each root $\alpha$ of $B_k$. Since $\bar{f}_k$ has at most one pole in each orbit $\omega\in\overline\K/\Z$ (this is the very meaning of $\mathrm{pdisp}(\bar{f}_k)=0$), it follows that $\bar{c}_k(\alpha)=\mathrm{dres}(\bar{f}_k,\omega(\alpha),1)$ for every $\alpha\in\overline\K$. To conclude, we observe that \[\mathrm{dres}(\bar{f}_k,\omega,1)=\mathrm{dres}(f_k,\omega,1)=\mathrm{dres}(f,\omega,k)\] for each $\omega\in\overline\K/\Z$; where the first equality follows from the summability of \mbox{$f_k-\bar{f}_k$} established in the proof of Proposition~\ref{prop:simple-reduction}, and the second equality is \eqref{eq:hermite-list-dres-k}.\end{proof}

\begin{rem}\label{rem:dres-deficiency}
    As we mentioned in \S\ref{sec:dres-sum}, the knowledge of a reduced form $h$ for $f$ is morally ``the same'' as knowledge of the discrete residues of $f$. And yet, the output $\bigl((B_1,D_1),\dots,(B_m,D_m) \bigr)$ of Algorithm~\ref{alg:dres} has the following deficiency: it may happen that for some $j\neq k$, we have $\mathrm{dres}(f,\omega,k)\neq 0 \neq \mathrm{dres}(f,\omega,j)$, and yet the representatives $\alpha_j,\alpha_k\in\omega$ such that $B_j(\alpha_j)=0=B_k(\alpha_k)$ may be distinct, with $\alpha_j\neq\alpha_k$. In many applications, this is not an issue because summability problems decompose into parallel summability problems in each degree component, as we see from Proposition~\ref{prop:dres-sum}. Actually, the systematic exploitation of this particularity was one of the original motivations behind Algorithm~\ref{alg:hermite-list}. But it is still somewhat unsatisfying that the different $B_k$ associated to the same $f$ are not better coordinated. Moreover, this does become a more serious issue in further applications to creative telescoping, where the discrete residues of different orders begin to interact. We explain how to address this problem in Remark~\ref{rem:dres-deficiency-fix}, when we have developed the requisite technology.
\end{rem}

\section{Generalizations and applications to serial summability and creative telescoping}\label{sec:extensions}

In this section we collect some modifications to the procedures described in the previous sections to produce outputs that allow for more immediate comparison of discrete residues across several rational functions and across different orders.

\subsection{Serial summability}\label{sec:serial-summability}
We begin with the parameterized summability problem \eqref{eq:multi-telescoper} described in the introduction. Let $\mathbf{f}=(f_1,\dots,f_n)\in\Kx^n$ be given, and suppose we wish to compute a $\mathbb{K}$-basis for \begin{equation}\label{eq:v-space}V(\mathbf{f}):=\left\{(v_1,\dots,v_n)\in\K^n \ \middle| \ \sum_{i=1}^n v_i\cdot f_i \quad \text{is summable}\right\}.\end{equation}  It follows from Definition~\ref{def:dres} that, for every $(v_1,\dots,v_n)\in\K^n$, $\omega\in\overline\K/\Z$, and $k\in \N$, we have \begin{equation}\label{eq:dres-linearity}\mathrm{dres}\left(\sum_{i=1}^nv_i\cdot f_i,\,\omega,\,k\right)=\sum_{i=1}^nv_i\cdot\mathrm{dres}(f_i,\omega,k).\end{equation} Therefore, by Proposition~\ref{prop:dres-sum}, we obtain the alternative characterization
\begin{equation}\label{eq:v-space-test}\begin{gathered}(v_1,\dots,v_n)\in V(\mathbf{f}) \qquad\Longleftrightarrow\\
\sum_{i=1}^nv_i\cdot\mathrm{dres}(f_i,\omega,k)=0\quad \text{for every} \quad \omega\in\overline\K/\Z \quad \text{and every} \quad k\in\N.\end{gathered}\end{equation}
This is a linear system that we can solve for the unknown vectors in $ V(\mathbf{f})$ as soon as we know how to write it down explicitly. The problem is that applying Algorithm~\ref{alg:dres} to each $f_i$ separately yields \[\mathtt{DiscreteResidues}(f_i)=\bigl((B_{i,1},D_{i,1}),\dots,(B_{i,m_i},D_{i,m_i})).\] It may happen that many different $\alpha_{i,k}\in\overline\K$ all belong to the same orbit $\omega\in\K/\Z$ and satisfy $B_{i,k}(\alpha_{i,k})=0$. In this case, the condition \eqref{eq:v-space-test} is \emph{not} equivalent to asking for $\sum_i v_i \cdot D_{i,k}=0$ for each $k\in\N$ (but see Proposition~\ref{prop:v-space} below). This leads to the undesirable bookkeping problem of having to decide, for each fixed $k\leq\max\{m_i \ | \ i=1,\dots,n\}$, for which orbits $\omega\in\overline\K/\Z$ it might happen that we have several different $\alpha_i\in\omega$ which are roots of $B_{i,k}$ but are not equal to one another on the nose.

To address this kind of problem, we introduce in Algorithm~\ref{alg:multi-reduction} a generalization of Algorithm~\ref{alg:simple-reduction} that computes reduced forms $\bar{f}_1,\dots,\bar{f}_n$ for several $f_1,\dots,f_n$ simultaneously and \emph{compatibly}, in the sense that if $\alpha_i\in\overline\K$ is a pole of $\bar{f}_i$ and $\alpha_j\in\overline\K$ is a pole of $\bar{f}_j$ such that $\omega(\alpha_i)=\omega(\alpha_j)$ then we insist on having $\alpha_i=\alpha_j$ on the nose. Note that this compatibility is succintly captured by the requirement that the product of the denominators of the $\bar{f}_i$ be shiftfree, which is how this compatibility is expressed in the Output of Algorithm~\ref{alg:multi-reduction} below. As in Algorithm~\ref{alg:simple-reduction}, we may assume that all the $f_i$ are proper and, thanks to Algorithm~\ref{alg:hermite-list}, that they all have squarefree denominators.

\begin{algorithm}
\caption{$\mathtt{SimpleReduction}^+$ procedure}\label{alg:multi-reduction}
\begin{algorithmic}[1]
\Require  An $n$-tuple $(f_1,\dots,f_n)\in\Kx^n$ of proper rational functions with squarefree denominators.
\Ensure  An $n$-tuple $(\bar{f}_1,\dots,\bar{f}_n)\in\Kx^n$ of proper rational functions with squarefree denominators, such that each $\mathrm{dres}(f_i,\omega,1)=\mathrm{dres}(\bar{f}_i,\omega,1)$ for every $\omega\in\overline\K/\Z$, and such that the product of the denominators of the $\bar{f}_i$ is shiftfree.\vspace{.1in}

 \State $(b_1,\dots,b_n)\gets \bigl(\mathrm{denom}(f_1),\dots,\mathrm{denom}(f_n)\bigr)$;
\State $b\gets\mathrm{lcm}(b_1,\dots,b_n)$
\State $S\gets\mathrm{ShiftSet}(b)$;
\If{$S =\emptyset $} \State {$(\bar{f}_1,\dots,\bar{f}_n)\gets (f_1,\dots,f_n)$;}
\Else \For{$\ell\in S$}
\State $g_\ell\gets \mathrm{gcd}(b,\sigma^{-\ell}(b))$;
\EndFor;
\State $G\gets  \mathrm{lcm}(g_\ell \ | \ \ell\in S)$;
\State $b_0\gets \frac{b}{G}$; \Comment{Exact division.}
\For{ $i=1..n$}
 \For{ $\ell\in S\cup \{0\}$} 
 \State $b_{i,\ell} \gets \mathrm{gcd}( \sigma^{-\ell}(b_0),b_i)$; 
\EndFor;
\State{$N_i\gets\{\ell\in S\cup\{0\} \ \big| \ \mathrm{deg}(b_{i,\ell})\geq 1\bigr\}$}; \Comment{Remove $b_{i,\ell}$ if $b_{i,\ell}=1$.}
\State $(a_{i,\ell} \ | \ \ell\in N_i) \gets \mathtt{ParFrac}(f_i;b_{i,\ell} \ | \ \ell\in N_i);$ 
\State{$\bar{f}_i\gets \displaystyle\sum_{\ell\in N_i}\sigma^\ell\left(\frac{a_{i,\ell}}{b_{i,\ell}}\right);$}
\EndFor;\EndIf;\\
\Return $(\bar{f}_1,\dots,\bar{f}_n)$. 
\end{algorithmic}
\end{algorithm}

\begin{cor}\label{cor:multi-reduction}
    Algorithm~\ref{alg:multi-reduction} is correct.
\end{cor}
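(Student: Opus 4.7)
The plan is to adapt the proof of Proposition~\ref{prop:simple-reduction} to the parameterized setting and then verify the one new claim about shiftfree compatibility across the $\bar{f}_i$. I would organize the argument into steps that closely mirror those in the single-input case, handling the trivial branch $S=\emptyset$ first by noting that each $b_i$ divides $b=\mathrm{lcm}(b_1,\ldots,b_n)$ with $\mathrm{disp}(b)=0$, so $\bar{f}_i=f_i$ satisfies all requirements. From here on I would assume $S\neq\emptyset$ and recall (as in the proof of Proposition~\ref{prop:simple-reduction}) that $b_0=b/G$ is the divisor of initial roots of $b$, i.e.\ the polynomial whose roots are those $\alpha\in\overline\K$ with $b(\alpha)=0$ such that $\alpha-\ell$ is not a root of $b$ for any $\ell\in\N$. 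In particular $\mathrm{disp}(b_0)=0$, and every root of $b$ is uniquely of the form $\beta+\ell_0$ for some root $\beta$ of $b_0$ and some $\ell_0\in\N\cup\{0\}$.

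Next, for each fixed $i$, I would establish that $\prod_{\ell\in N_i}b_{i,\ell}=b_i$ with pairwise coprime factors. The key observation is that if $\alpha$ is a root of $b_i$ (hence of $b$) and $\alpha-\ell_0$ is the corresponding initial root of $b_0$, then $\ell_0\in S\cup\{0\}$: for $\ell_0\geq 1$ both $\alpha$ and $\alpha-\ell_0$ are roots of $b$, which forces $\ell_0\in\mathtt{ShiftSet}(b)=S$ by definition. It follows that $\alpha$ is a root of $b_{i,\ell_0}$ and of no other $b_{i,\ell}$, so the partial fraction decomposition called in the algorithm is well-defined, and consequently $\bar{f}_i$ is a sum of proper rational functions with squarefree denominators, hence itself proper with squarefree denominator.

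The step that is genuinely new compared to Proposition~\ref{prop:simple-reduction} is the shiftfree compatibility claim. Here I would use the identity
\[\sigma^\ell(b_{i,\ell})=\sigma^\ell\bigl(\gcd(\sigma^{-\ell}(b_0),b_i)\bigr)=\gcd(b_0,\sigma^\ell(b_i)),\]
which shows that every $\sigma^\ell(b_{i,\ell})$ divides $b_0$, and therefore the denominator of each $\bar{f}_i$ also divides $b_0$. Since the set of distinct roots of the product of the denominators of all the $\bar{f}_i$ is then contained in the root set of the shiftfree polynomial $b_0$, the product itself must be shiftfree as required.

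Finally, the preservation of first-order discrete residues $\mathrm{dres}(f_i,\omega,1)=\mathrm{dres}(\bar{f}_i,\omega,1)$ would be obtained exactly as in Proposition~\ref{prop:simple-reduction}: for each $\ell\in N_i$ with $\ell\geq 1$, the expression $\frac{a_{i,\ell}}{b_{i,\ell}}-\sigma^\ell\bigl(\frac{a_{i,\ell}}{b_{i,\ell}}\bigr)$ telescopes into a sum of $\Delta$-images and is thus summable, so $f_i-\bar{f}_i$ is summable and Proposition~\ref{prop:dres-sum} finishes the job. The main (though mild) obstacle is the conceptual point underlying the shiftfree step: one must notice that taking $b$ to be the \emph{lcm} of all the $b_i$, rather than processing each $b_i$ separately, is precisely what forces the denominators $\sigma^\ell(b_{i,\ell})$ across different $i$ to all sit inside the common shiftfree $b_0$, thereby yielding a coordinated choice of orbit representatives that a naïve parallel invocation of Algorithm~\ref{alg:simple-reduction} would not produce.
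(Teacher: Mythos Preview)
Your proposal is correct and follows essentially the same approach as the paper's proof, which is itself just a brief sketch pointing back to Proposition~\ref{prop:simple-reduction}. The paper highlights the same key points you do---that $b_0$ contains the unique initial root of $b$ in each orbit, that the roots of $b_{i,\ell}$ are the roots of $b_i$ exactly $\ell$ shifts away from a root of $b_0$, and that each denominator of $\bar{f}_i$ divides the shiftfree $b_0$---but your write-up is more complete and would serve well as a fleshed-out version of the paper's sketch.
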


\begin{proof}
The proof is very similar to that of Proposition~\ref{prop:simple-reduction}, so we only sketch the main points. The key difference is that now $b_0$ has been defined so that for each root $\alpha$ of $b_0$ and each root $\alpha_i$ of $b_i$ belonging to $\omega(\alpha)$ we have that $\alpha_i-\alpha\in\mathbb{Z}_{\geq 0}$. The roots of $b_{i,\ell}$ are precisely those roots of $b_i$ which are $\ell$ steps away from the unique root of $b_0$ that belongs to the same orbit. By construction, the denominator of each $\bar{f}_i$ is a factor of $b_0$, which has $\mathrm{disp}(b_0)=0$ as before. \end{proof}

\begin{rem}\label{rem:dres-deficiency-fix}
     We are now able to address the deficiency discussed in Remark~\ref{rem:dres-deficiency} with the aid of Algorithm~\ref{alg:multi-reduction}. Indeed, for a non-zero proper $f\in\Kx$, let us define $(f_1,\dots,f_m):=\mathtt{HermiteList}(f)$ as in Algorithm~\ref{alg:dres}. If we now set $(\bar{f}_1,\dots,\bar{f}_m):=\mathtt{SimpleReduction}^+(f_1,\dots,f_m)$, instead of setting $\bar{f}_k:=\mathtt{SimpleReduction}(f_k)$ separately for $k=1,\dots,m$, we will no longer have the problem of the $B_k$ being incompatible.
\end{rem}

More generally, we can combine Algorithm~\ref{alg:multi-reduction} with the modification proposed in the above Remark~\ref{rem:dres-deficiency-fix} to compute symbolic representations of the discrete residues $\mathrm{dres}(f_i,\omega,k)$ of several $f_1,\dots,f_n\in\Kx$, which are compatible simultaneously across the different $f_i$ as well as across the different $k\in\N$. This will be done in Algorithm~\ref{alg:multi-dres}, after explaining the following small necessary modification to the $\mathtt{FirstResidues}$ procedure defined in \eqref{eq:first-residues-def}. For an $n$-tuple of proper rational functions $\mathbf{f}=(f_1,\dots,f_n)$ with squarefree denominators, suppose $\mathtt{FirstResidues}(f_i)=:(b_i,r_i)$ as in \eqref{eq:first-residues-def}, and let $b:=\mathrm{lcm}(b_1,\dots,b_n)$. Letting $a_i:=\mathrm{numer}(f_i)$ and $d_i:=\frac{b}{b_i}$, we see that $\mathrm{gcd}(b_i,d_i)=1$ because $b$ is squarefree, and therefore by the Chinese Remainder Theorem we can find a unique $p_i\in\K[x]$ with $\mathrm{deg}(p_i)<b$ such that \[p_i\cdot\frac{d}{dx}(b_i)\equiv a_i \pmod{ b_i }\qquad \text{and} \qquad p_i\equiv 0 \pmod{d_i}.\] Then we see that $p_i(\alpha)$ is the first-order residue of $f_i$ at each root $\alpha$ of $b$, whether or not $f_i$ actually has a pole at such an $\alpha$. With notation as above, define \[\mathtt{FirstResidues}^+(\mathbf{f}):=(b;(p_1,\dots,p_n)).\] 

\begin{algorithm}
\caption{$\mathtt{DiscreteResidues}^+$ procedure}\label{alg:multi-dres}
\begin{algorithmic}[1]
\Require  An $n$-tuple $(f_1,\dots,f_n)\in\Kx^n$ of proper non-zero rational functions.
\Ensure A pair $(B;\mathbf{D})$, consisting of a polynomial $B\in\K[x]$, 
and an array $\mathbf{D}=\bigl(D_{i,k} \ | \ 1\leq i \leq n; 1\leq k \leq m \bigr)$ of polynomials $D_{i,k}\in\K[x]$, such that $(B,D_{i,1}),\dots,(B,D_{i,m})$ is a $\K$-rational system of discrete residues of $f_i$ as in Definition~\ref{def:goal} for each $1\leq i\leq n$.\vspace{.1in}

\For{$i=1..n$}
\State $(f_{i,1},\dots,f_{i,m_i})\gets \mathtt{HermiteList}(f_i)$;
\EndFor;
\State $m\gets \mathrm{max}\{m_1,\dots,m_n\}$;
\ForAll{$i=1..n$ such that $m_i<m$}
\For{$k=(m_i+1)..m$}
\State $f_{i,k}\gets 0$;\Comment{Extend short vectors by $0$.}
\EndFor;
\EndFor;
\State $\mathbf{f}\gets (f_{i,k} \ | \ 1\leq i \leq n; 1\leq k \leq m)$
\State $\bar{\mathbf{f}}\gets\mathtt{SimpleReduction}^+(\mathbf{f})$;\\
\Return $\mathtt{FirstResidues}^+(\bar{\mathbf{f}})$. 
\end{algorithmic}
\end{algorithm}

\begin{cor}\label{cor:multi-dres}
    Algorithm~\ref{alg:multi-dres} is correct.
\end{cor}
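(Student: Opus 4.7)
The plan is to verify, for each fixed $1 \leq i \leq n$, the four bullets of Definition~\ref{def:goal} for the sequence of pairs $(B, D_{i,1}), \ldots, (B, D_{i,m})$ produced by Algorithm~\ref{alg:multi-dres}, by chaining together the correctness of the three subroutines it invokes. First I would appeal to Lemma~\ref{lem:hermite-list} to conclude that $f_i$ has no poles of order beyond $m_i \leq m$ and that $\mathrm{dres}(f_i, \omega, k) = \mathrm{dres}(f_{i,k}, \omega, 1)$ for every orbit $\omega \in \overline\K / \Z$ and every $1 \leq k \leq m_i$; padding by zeros up to length $m$ is harmless since the discrete residues of $0$ vanish identically. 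Then Corollary~\ref{cor:multi-reduction} gives that the $\bar f_{i,k}$ returned by $\mathtt{SimpleReduction}^+$ share the first-order discrete residues of the $f_{i,k}$ and have denominators $b_{i,k}$ whose product is shiftfree, so $B := \mathrm{lcm}(b_{i,k})$ inherits shiftfreeness (as any divisor of a shiftfree polynomial is shiftfree) as well as squarefreeness (as an lcm of squarefree polynomials).

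Next I would unpack the $\mathtt{FirstResidues}^+$ construction: by design $\deg(D_{i,k}) < \deg(B)$, and $D_{i,k}(\alpha)$ is determined by a CRT interpolation so that $D_{i,k}(\alpha)$ equals the classical first-order residue of $\bar f_{i,k}$ at $\alpha$ whenever $b_{i,k}(\alpha) = 0$, and $D_{i,k}(\alpha) = 0$ whenever $\alpha$ is a root of $B/b_{i,k}$. Since $B$ is shiftfree, each orbit contains at most one root of $B$, and in particular $\bar f_{i,k}$ has at most one pole per orbit; so the classical first-order residue of $\bar f_{i,k}$ at a root $\alpha$ of $b_{i,k}$ coincides with $\mathrm{dres}(\bar f_{i,k}, \omega(\alpha), 1)$, which in turn equals $\mathrm{dres}(f_i, \omega(\alpha), k)$ by the preservation of first-order discrete residues established in the previous paragraph. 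The third bullet of Definition~\ref{def:goal} is then immediate: if no representative of an orbit $\omega$ is a root of $B$, then none is a root of $b_{i,k}$, so $\bar f_{i,k}$ has no pole in $\omega$ and $\mathrm{dres}(f_i, \omega, k) = 0$.

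The main delicate point, which I would treat most carefully, is the fourth bullet in the case $B(\alpha) = 0$ but $b_{i,k}(\alpha) \neq 0$, i.e.\ when $\alpha$ is a root of $B$ because it is a pole of some other $\bar f_{j,l}$ but not of $\bar f_{i,k}$ itself. The key observation here is that shiftfreeness of $B$ forbids any translate of $\alpha$ from being a root of $b_{i,k}$ either, so $\bar f_{i,k}$ has no pole anywhere in $\omega(\alpha)$, whence $\mathrm{dres}(f_i, \omega(\alpha), k) = 0 = D_{i,k}(\alpha)$, the latter equality being forced by the CRT congruence $p_{i,k} \equiv 0 \pmod{B/b_{i,k}}$. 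This is precisely the mechanism by which the common ``universal'' $B$ shared across all indices $(i,k)$ avoids inconsistency, resolving the deficiency flagged in Remark~\ref{rem:dres-deficiency}.
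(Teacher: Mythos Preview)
Your proposal is correct and follows essentially the same approach as the paper, which simply says the result is an immediate consequence of Corollary~\ref{cor:multi-reduction} together with the proof of Theorem~\ref{thm:dres} \emph{mutatis mutandis}. You have spelled out the ``mutatis mutandis'' in full, and in particular you handle carefully the one genuinely new ingredient beyond the single-function case: the CRT condition $p_{i,k}\equiv 0\pmod{B/b_{i,k}}$ forcing $D_{i,k}(\alpha)=0$ at roots $\alpha$ of $B$ that are not poles of $\bar f_{i,k}$, together with the shiftfreeness of $B$ ensuring that $\bar f_{i,k}$ then has no pole anywhere in $\omega(\alpha)$.
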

\begin{proof}
    This is an immediate consequence of the correctness of Algorithm~\ref{alg:multi-reduction} proved in Corollary~\ref{cor:multi-reduction}, coupled with the same proof, \emph{mutatis mutandis}, given for Theorem~\ref{thm:dres}.
\end{proof}

Algorithm~\ref{alg:multi-dres} leads immediately to a simple algorithmic solution of the problem of computing $V(\mathbf{f})$ in \eqref{eq:v-space}.

\begin{prop}\label{prop:v-space}
    Let $\mathbf{f}=(f_1,\dots,f_n)$ with each $0\neq f_i\in\Kx$ proper. Let  \[\mathtt{DiscreteResidues}^+(\mathbf{f})=(B,\mathbf{D}),\qquad \text{where} \qquad \mathbf{D}=(D_{i,k} \ | \ 1\leq i\leq n; \ 1\leq k\leq m)\] is as in Algorithm~\ref{alg:multi-dres}, and let $V(\mathbf{f})$ be as in \eqref{eq:v-space}. Then \begin{equation}\label{eq:v-space-proof}V(\mathbf{f})=\left\{(v_1,\dots,v_n)\in\K^n \ \middle| \ \sum_{i=1}^n v_i\cdot D_{i,k}=0 \ \text{for each} \ 1\leq k \leq m\right\}.\end{equation}
\end{prop}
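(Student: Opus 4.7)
The plan is to combine the alternative characterization of $V(\mathbf{f})$ given in \eqref{eq:v-space-test} with the defining properties of a $\K$-rational system of discrete residues (Definition~\ref{def:goal}), exploiting the crucial point that Algorithm~\ref{alg:multi-dres} produces the \emph{same} denominator polynomial $B$ across all indices $i$ and $k$.

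First I would record that, by the correctness of Algorithm~\ref{alg:multi-dres} established in Corollary~\ref{cor:multi-dres}, for each $1\leq i \leq n$ and each $1\leq k\leq m$ the pair $(B, D_{i,k})$ satisfies $\mathrm{deg}(D_{i,k})<\mathrm{deg}(B)$, $D_{i,k}(\alpha)=\mathrm{dres}(f_i,\omega(\alpha),k)$ whenever $B(\alpha)=0$, and $\mathrm{dres}(f_i,\omega,k)=0$ whenever no $\alpha\in\omega$ is a root of $B$. Combined with the linearity identity \eqref{eq:dres-linearity}, this means that for any $\mathbf{v}=(v_1,\dots,v_n)\in\K^n$ and any $\alpha\in\overline\K$ with $B(\alpha)=0$ we have
\[
\sum_{i=1}^n v_i\cdot D_{i,k}(\alpha)\;=\;\mathrm{dres}\!\left(\sum_{i=1}^n v_i f_i,\,\omega(\alpha),\,k\right),
\]
and for every orbit $\omega\in\overline\K/\Z$ with no representative root of $B$, all the discrete residues $\mathrm{dres}(f_i,\omega,k)$ vanish, hence so does $\mathrm{dres}(\sum_i v_if_i,\omega,k)$.

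For the containment of the right-hand side of \eqref{eq:v-space-proof} in $V(\mathbf{f})$, I would suppose $\sum_i v_i\cdot D_{i,k}=0$ as polynomials in $\K[x]$ for every $1\leq k\leq m$. Then at every root $\alpha$ of $B$ the displayed identity above forces $\mathrm{dres}(\sum_i v_if_i,\omega(\alpha),k)=0$, while for orbits not represented in $B$ the same discrete residue vanishes for free. For $k>m$, the output of $\mathtt{HermiteList}$ used inside Algorithm~\ref{alg:multi-dres} guarantees that none of the $f_i$ have any pole of order exceeding $m$, so $\mathrm{dres}(f_i,\omega,k)=0$ identically and hence $\mathrm{dres}(\sum_i v_if_i,\omega,k)=0$ too. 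Proposition~\ref{prop:dres-sum} then gives $\sum_i v_if_i$ summable, i.e., $\mathbf{v}\in V(\mathbf{f})$.

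For the reverse containment, I would assume $\mathbf{v}\in V(\mathbf{f})$, so by \eqref{eq:v-space-test} all discrete residues of $\sum_i v_i f_i$ vanish. Fixing $k\in\{1,\dots,m\}$ and setting $P_k:=\sum_{i=1}^n v_i\cdot D_{i,k}\in\K[x]$, the displayed identity shows $P_k(\alpha)=0$ at every root $\alpha$ of $B$. The subtle point — and the only ``hard'' part of the argument — is to exclude the possibility that $P_k$ is a nonzero polynomial vanishing at all roots of $B$. This is resolved by the degree bound $\mathrm{deg}(P_k)\leq \max_i\mathrm{deg}(D_{i,k})<\mathrm{deg}(B)$ coming from Definition~\ref{def:goal}, together with the fact that $B$ is squarefree and so has exactly $\mathrm{deg}(B)$ distinct roots in $\overline\K$. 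Hence $P_k=0$ identically, which completes the proof.
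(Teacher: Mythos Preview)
Your proof is correct and follows essentially the same approach as the paper's: both establish the key identity $\mathrm{dres}\bigl(\sum_i v_if_i,\omega(\alpha),k\bigr)=\sum_i v_iD_{i,k}(\alpha)$ at roots $\alpha$ of $B$ via Corollary~\ref{cor:multi-dres}, and then use the degree bound $\mathrm{deg}(D_{i,k})<\mathrm{deg}(B)$ together with the squarefreeness of $B$ to conclude that vanishing at all roots is equivalent to identical vanishing of the polynomial $\sum_i v_iD_{i,k}$. Your version is slightly more explicit in handling orbits not meeting the roots of $B$ and orders $k>m$, but the argument is the same.
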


\begin{proof}
For each $(v_1,\dots,v_n)\in\K^n$ and $\alpha\in\overline{\K}$ such that $B(\alpha)=0$, \begin{equation}\label{eq:dres-application}\mathrm{dres}\left(\sum_{i=1}^nv_i f_i,\omega(\alpha),k\right)=\sum_{i=1}^nv_iD_{i,k}(\alpha)\end{equation} by the correctness of Algorithm~\ref{alg:multi-dres} proved in Corollary~\ref{cor:multi-dres}. Moreover, since $\mathrm{deg}(D_{i,k})<\mathrm{deg}(B)$ (whenever $D_{i,k}\neq 0$) and $B$ is squarefree, we see that for each given $1\leq k\leq m$ the expression \eqref{eq:dres-application} is zero for every root $\alpha$ of $B$ if and only if the polynomial $\sum_iv_iD_{i,k}\equiv 0$ identically. We conclude by Proposition~\ref{prop:dres-sum}.\end{proof}

\subsection{Serial creative telescoping}\label{sec:serial-telescoping}
In \cite[Remark~7.5]{arreche-sitaula:2024} we stated that one could produce without too much additional effort a variant of Proposition~\ref{prop:v-space} that computes more generally the $\K$-vector space of solutions to the creative telescoping problem obtained by replacing the unknown coefficients $v_i\in\K$ in \eqref{eq:multi-telescoper} with unknown linear differential operators $\mathcal{L}_i\in\K\bigl[\frac{d}{dx}\bigr]$. We shall produce such a variant here.

For an $n$-tuple $\mathbf{f}=(f_1,\dots,f_n)\in\Kx^n$ as before, let us denote by \begin{equation}\label{eq:w-space}W(\mathbf{f}):=\left\{\bigl(\mathcal{L}_1,\dots,\mathcal{L}_n\bigr) \in\K\left[\frac{d}{dx}\right]^n \ \middle| \ \sum_{i=1}^n\mathcal{L}_i(f_i) \ \text{is summable}\right\}.\end{equation} Note that this is a (left) module over the commutative principal ideal domain $\K\left[\frac{d}{dx}\right]$, which is abstractly isomorphic to the polynomial ring $\K[X]$. Our goal is to compute a $\K\left[\frac{d}{dx}\right]$-basis for the free $\K\left[\frac{d}{dx}\right]$-module of finite rank $W(\mathbf{f})$. 

We begin with a computational result.
\begin{lem}\label{lem:w-space} For $f\in\K(x)$, $\omega\in\overline\K/\Z$, $k\in\N$, and a linear differential operator $\mathcal{L}=\sum_j\lambda_j\cdot\frac{d^j}{dx^j}\in\K\left[\frac{d}{dx}\right]$, the discrete residue \[\mathrm{dres}\bigl(\mathcal{L}(f),\omega,k\bigr)=\sum_{j=0}^{k-1}\lambda_j\frac{(-1)^j(k-1)!}{(k-1-j)!}\mathrm{dres}(f,\omega,k-j).\]
\end{lem}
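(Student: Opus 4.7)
The plan is to exploit $\K$-linearity of the discrete residue in its first argument together with the linearity of $\mathcal{L}$, so that it suffices to establish the identity for each monomial operator $\mathcal{L}=\frac{d^j}{dx^j}$ separately. Linearity of $\mathrm{dres}(-,\omega,k)$ in its first argument is immediate from Definition~\ref{def:dres}, since extracting the coefficient $c_k(\alpha)$ from the complete partial fraction decomposition \eqref{eq:partial-fraction-decomposition} and summing over $\alpha\in\omega$ are both $\K$-linear operations.

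Next I would compute the effect of $\frac{d^j}{dx^j}$ on the partial fraction decomposition \eqref{eq:partial-fraction-decomposition} term by term. The key computation is the elementary identity
\[\frac{d^j}{dx^j}\left(\frac{1}{(x-\alpha)^K}\right)=\frac{(-1)^j(K+j-1)!/(K-1)!}{(x-\alpha)^{K+j}},\]
valid for all $K\in\N$ and $j\in\N\cup\{0\}$. Applied to \eqref{eq:partial-fraction-decomposition} this yields
\[\frac{d^j}{dx^j}(f)=\frac{d^jp}{dx^j}+\sum_{K\in\N}\sum_{\alpha\in\overline\K}\frac{(-1)^j(K+j-1)!/(K-1)!\cdot c_K(\alpha)}{(x-\alpha)^{K+j}}.\]
Reading off the coefficient of $(x-\alpha)^{-k}$ in the right-hand side (necessarily a coefficient with $K+j=k$, i.e.\ $K=k-j$, since the polynomial part contributes nothing to any pole order), I obtain that the order-$k$ coefficient of $\frac{d^j}{dx^j}(f)$ at $\alpha$ equals $(-1)^j\frac{(k-1)!}{(k-j-1)!}c_{k-j}(\alpha)$ when $1\leq j\leq k-1$, and is zero when $j\geq k$ (since then the relevant coefficient $c_{k-j}(\alpha)$ would have non-positive index and is understood to be zero).

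Summing this coefficient over $\alpha\in\omega$ gives
\[\mathrm{dres}\!\left(\tfrac{d^j}{dx^j}(f),\omega,k\right)=(-1)^j\frac{(k-1)!}{(k-1-j)!}\,\mathrm{dres}(f,\omega,k-j)\]
when $0\leq j\leq k-1$, and $0$ when $j\geq k$. Finally, applying linearity in $\mathcal{L}$ and summing these contributions weighted by the $\lambda_j$, only the indices $0\leq j\leq k-1$ survive, producing the claimed formula. There is no genuine obstacle here; the only care needed is the bookkeeping of the factorial coefficients and the convention that $c_{k-j}(\alpha)=0$ for $k-j\leq 0$, which ensures that the upper limit of summation is $k-1$ and not simply the order of $\mathcal{L}$.
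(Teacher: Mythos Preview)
Your proof is correct and follows essentially the same approach as the paper: both apply the elementary derivative identity for $(x-\alpha)^{-K}$ term by term to the complete partial fraction decomposition of $f$, read off the coefficient of $(x-\alpha)^{-k}$, and sum over $\alpha\in\omega$. The paper compresses this into a single displayed formula for $\mathcal{L}$ applied all at once, whereas you first reduce by linearity to monomial operators $\tfrac{d^j}{dx^j}$, but the underlying computation is identical (note the harmless slip where you write $1\leq j\leq k-1$ rather than $0\leq j\leq k-1$).
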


\begin{proof}
    This follows from the elementary computation that \[\mathcal{L}\left( \sum_{k\in\N}\sum_{\alpha\in\overline\K}\frac{c_k(\alpha)}{(x-\alpha)^k}\right)=\sum_{k\in\N}\sum_{\alpha\in\overline\K}\left(\sum_{j=0}^{k-1}\lambda_j\frac{(-1)^j(k-1)!}{(k-1-j)!}c_{k-j}(\alpha)\right)\frac{1}{(x-\alpha)^k}\] and the Definition~\ref{def:dres}.
\end{proof}

The following differential analogue of \eqref{eq:dres-linearity} is an immediate consequence of Lemma~\ref{lem:w-space} and the $\K$-linearity of discrete residues evident from \eqref{eq:dres-def}.

\begin{cor}\label{cor:w-space} For $f_1,\dots,f_n\in\K(x)$, $\omega\in\overline\K/\Z$, $k\in\N$, and linear differential operators $\mathcal{L}_i=\sum_j\lambda_{i,j}\cdot\frac{d^j}{dx^j}\in\K\left[\frac{d}{dx}\right]$, the discrete residue 
\[
\mathrm{dres}\left(\sum_{i=1}^n\mathcal{L}_i(f_i),\,\omega,\,k\right)=\sum_{i=1}^n\sum_{j=0}^{k-1}\lambda_{i,j}\frac{(-1)^j(k-1)!}{(k-1-j)!}\mathrm{dres}(f_i,\omega,k-j).
\]
\end{cor}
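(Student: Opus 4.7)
The plan is to combine two elementary ingredients. First, I would establish the $\K$-linearity of $\mathrm{dres}(\cdot,\omega,k)$ in its first argument directly from Definition~\ref{def:dres}: the complete partial fraction decomposition \eqref{eq:partial-fraction-decomposition} is itself a $\K$-linear (in fact $\overline\K$-linear) operation on $\Kx$, so for any $g_1,\ldots,g_n\in\Kx$ the coefficients $c_k(\alpha)$ of $\sum_i g_i$ are the corresponding sums of the coefficients of the individual $g_i$. Summing over $\alpha$ in a fixed orbit $\omega\in\overline\K/\Z$ as in \eqref{eq:dres-def} preserves this linearity, so applied to $g_i=\mathcal{L}_i(f_i)$ this gives
\[
\mathrm{dres}\left(\sum_{i=1}^n \mathcal{L}_i(f_i),\,\omega,\,k\right) \;=\; \sum_{i=1}^n \mathrm{dres}\bigl(\mathcal{L}_i(f_i),\,\omega,\,k\bigr).
\]

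Second, I would invoke Lemma~\ref{lem:w-space} separately for each $i=1,\ldots,n$, with $\mathcal{L}=\mathcal{L}_i$ and $f=f_i$, to expand each summand as
\[
\mathrm{dres}\bigl(\mathcal{L}_i(f_i),\,\omega,\,k\bigr) \;=\; \sum_{j=0}^{k-1}\lambda_{i,j}\frac{(-1)^j(k-1)!}{(k-1-j)!}\,\mathrm{dres}(f_i,\omega,k-j).
\]
Substituting this into the previous display and retaining the double sum in the order stated produces the claimed formula verbatim. There is no serious obstacle: both ingredients are already in hand, and the bookkeeping requires no reindexing. The only small point worth flagging is that the $\K$-linearity used here is applied with coefficient $1$ on each of the $n$ distinct inputs $\mathcal{L}_i(f_i)\in\Kx$, which is conceptually distinct from (and used in tandem with) the scalar $\K$-linearity \eqref{eq:dres-linearity} that powers the treatment of $V(\mathbf{f})$ in Section~\ref{sec:serial-summability}.
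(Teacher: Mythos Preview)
Your proposal is correct and matches the paper's own argument essentially verbatim: the paper states that the corollary is an immediate consequence of Lemma~\ref{lem:w-space} together with the $\K$-linearity of discrete residues evident from \eqref{eq:dres-def}, which is precisely the two-ingredient combination you outline.
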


Let us introduce, for convenience, the following bounded-order variant of $W(\mathbf{f})$ in \eqref{eq:w-space}. For a bound $\beta\in\Z_{\geq 0}$, we denote by \begin{equation}\label{eq:w-space-bounded}
    W^\beta(\mathbf{f}):=\bigl\{(\mathcal{L}_1,\dots,\mathcal{L}_n)\in W(\mathbf{f}) \ \big| \ \mathrm{ord}(\mathcal{L}_i)\leq \beta \ \text{for} \ 1\leq i \leq n\bigr\}.
\end{equation}

\begin{prop}\label{prop:w-space}
    Let $\mathbf{f}=(f_1,\dots,f_n)$ with each $0\neq f_i\in\Kx$ proper. Let  \[\mathtt{DiscreteResidues}^+(\mathbf{f})=(B,\mathbf{D}),\qquad \text{where} \qquad \mathbf{D}=(D_{i,k} \ | \ 1\leq i\leq n; \ 1\leq k\leq m)\] is as in Algorithm~\ref{alg:multi-dres}, and let $W^\beta(\mathbf{f})$ be as in \eqref{eq:w-space-bounded}. Then for each $\beta\in\Z_{\geq 0}$, \begin{equation}\label{eq:w-space-proof}W^\beta(\mathbf{f})\cong\left\{\ \begin{pmatrix}
        \lambda_{1,0} & \cdots & \lambda_{1,\beta} \\
        \vdots & & \vdots \\
        \lambda_{n,0} & \cdots & \lambda_{n,\beta}
    \end{pmatrix}\in\K^{n\times (\beta+1)} \ \middle| \ \begin{gathered}\sum_{i=1}^n \sum_{j=1}^m\vphantom{\begin{matrix}
        X \\ X \\ X 
    \end{matrix}}\lambda_{i,k-j}\frac{(-1)^{j}D_{i,j}}{(j-1)!} =0
    \\ \text{for each} \ 1\leq k \leq m+\beta\vphantom{\frac{A}{B}}\end{gathered}\right\},\end{equation} where the isomorphism is given explicitly by writing $\mathcal{L}_i=\sum_{j=0}^\beta \lambda_{i,j}\frac{d^j}{dx^j}$, and with the convention that $\lambda_{i,k-j}:=0$ whenever $j>k$.
\end{prop}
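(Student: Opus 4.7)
The plan is to combine Proposition~\ref{prop:dres-sum}, Corollary~\ref{cor:w-space}, and the correctness of Algorithm~\ref{alg:multi-dres} (as per Corollary~\ref{cor:multi-dres}) in order to translate the defining summability condition of $W^\beta(\mathbf{f})$ into the system of polynomial equations on the right-hand side of \eqref{eq:w-space-proof}. First, I would fix the evident $\K$-linear isomorphism $W^\beta(\mathbf{f}) \cong \K^{n \times (\beta+1)}$ obtained by sending each $n$-tuple $(\mathcal{L}_1,\dots,\mathcal{L}_n)$ to the coefficient matrix $(\lambda_{i,j})$ determined by $\mathcal{L}_i = \sum_{j=0}^\beta \lambda_{i,j} \frac{d^j}{dx^j}$, and invoke Proposition~\ref{prop:dres-sum} to rephrase membership in $W^\beta(\mathbf{f})$ as the vanishing of $\mathrm{dres}\bigl(\sum_i \mathcal{L}_i(f_i),\omega,k\bigr)$ for every $\omega \in \overline\K/\Z$ and every $k \in \N$.

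Next, I would apply Corollary~\ref{cor:w-space} to expand this discrete residue and perform the substitution $r = k - j$ in the inner sum. After dividing the resulting expression by the nonzero factor $(-1)^k (k-1)!$, the condition for a given $\omega$ and $k$ takes the form
\[
\sum_{i=1}^n \sum_{r=1}^k \lambda_{i,k-r} \frac{(-1)^r}{(r-1)!} \mathrm{dres}(f_i,\omega,r) = 0.
\]
By Corollary~\ref{cor:multi-dres}, for any orbit $\omega$ containing no root of $B$ every residue $\mathrm{dres}(f_i,\omega,r)$ vanishes, so the condition holds automatically; while for an orbit $\omega$ containing a (necessarily unique) root $\alpha$ of $B$, we have $\mathrm{dres}(f_i,\omega,r) = D_{i,r}(\alpha)$ for $1 \leq r \leq m$ and $\mathrm{dres}(f_i,\omega,r)=0$ for $r > m$. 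Adopting the convention $\lambda_{i,k-r} := 0$ whenever $r > k$ or $k - r > \beta$, the surviving condition becomes $P_k(\alpha) = 0$ for every root $\alpha$ of $B$, where
\[
P_k \ := \ \sum_{i=1}^n\sum_{r=1}^m \lambda_{i,k-r} \frac{(-1)^r D_{i,r}}{(r-1)!} \ \in\ \K[x].
\]

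Since $B$ is squarefree and every $\deg(D_{i,r}) < \deg(B)$, we have $\deg(P_k) < \deg(B)$, so $P_k$ vanishes at all roots of $B$ if and only if $P_k \equiv 0$ identically in $\K[x]$. Finally, to prune the infinite list of conditions indexed by $k \in \N$ down to the finite list $1 \leq k \leq m+\beta$ appearing in \eqref{eq:w-space-proof}, I would observe that whenever $k > m + \beta$, every $r$ in the range $1 \leq r \leq m$ satisfies $k - r > \beta$, forcing every term of $P_k$ to be zero by convention, so that the condition $P_k \equiv 0$ holds automatically. The main point of care throughout is the bookkeeping with indices and signs in the reindexing $r = k - j$, together with the simultaneous management of the two bounds $r \leq m$ (coming from the vanishing of higher-order residues) and $k - r \leq \beta$ (coming from the order restriction on the $\mathcal{L}_i$) on the range of summation; I expect this to be the only real obstacle.
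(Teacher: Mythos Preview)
Your proposal is correct and follows essentially the same approach as the paper: apply Corollary~\ref{cor:w-space}, re-index the sum, factor out $(-1)^k(k-1)!$, and then argue exactly as in the proof of Proposition~\ref{prop:v-space} (squarefreeness of $B$ plus the degree bound $\deg(D_{i,r})<\deg(B)$ to pass from vanishing at roots to polynomial identity). The only minor slip is your opening sentence, where you call the coefficient map ``the evident $\K$-linear isomorphism $W^\beta(\mathbf{f})\cong\K^{n\times(\beta+1)}$''; what you mean is the ambient isomorphism $\K\bigl[\tfrac{d}{dx}\bigr]_{\leq\beta}^n\cong\K^{n\times(\beta+1)}$, under which $W^\beta(\mathbf{f})$ then corresponds to the subspace cut out by the equations $P_k\equiv 0$.
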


\begin{proof}
    Since the order of each $\mathcal{L}_i\in\K\left[\frac{d}{dx}\right]$ is at most $\beta$ and the highest order of any pole of each $f_i$ is at most $m$, it follows from the Definition~\ref{def:goal} that every $\mathrm{dres}(\mathcal{L}_i(f_i),\omega,k)=0$ for every $\omega\in\overline\K/\Z$ and every $k>m+\beta$. For the remaining $1\leq k \leq m+\beta$, re-indexing the sum in Corollary~\ref{cor:w-space} yields \begin{align*}\mathrm{dres}\left(\sum_{i=1}^n\mathcal{L}_i(f_i),\omega,k\right) &=\sum_{i=1}^n\sum_{j=1}^{k}\lambda_{i,k-j}\frac{(-1)^{k-j}(k-1)!}{(j-1)!}\mathrm{dres}(f_i,\omega,j)\\
    &=(-1)^k(k-1)!\sum_{i=1}^n\sum_{j=1}^{k}\lambda_{i,k-j}\frac{(-1)^{j}\mathrm{dres}(f_i,\omega,j)}{(j-1)!}.\end{align*} The rest of the proof proceeds just like that of Proposition~\ref{prop:v-space}, \emph{mutatis mutandis}.
\end{proof}

\begin{rem}\label{rem:w-space-bound-range}
    Note that the range \eqref{eq:w-space-bounded} may be excessive in particular cases. If we have a given $(\mathcal{L}_1,\dots,\mathcal{L}_n)\in\K\left[\frac{d}{dx}\right]^n$ and we wish to test whether it belongs to $W(\mathbf{f})$, it is sufficient to verify \eqref{eq:w-space-bounded} only for $1\leq k\leq M$, for \[M:=\mathrm{max}\bigl\{\mathrm{ord}(\mathcal{L}_1)+m_1,\dots,\mathrm{ord}(\mathcal{L}_n)+m_n\bigr\},\] since the highest order of any pole of $\sum_{i=1}^n\mathcal{L}_i(f_i)$ is bounded above by $M$, and therefore $\mathrm{dres}(\sum_i\mathcal{L}_i(f_i),\omega,k)=0$ automatically for any $k>M$. In general, $M$ may be strictly smaller than the sum of the maxima $m+\mathrm{max}\{\beta_1,\dots,\beta_n\}$.
\end{rem}

Since $W(\mathbf{f})$ is the filtered union of the $W^\beta(\mathbf{f})$, the above Proposition~\ref{prop:w-space} will eventually succeed in computing a $\K\left[\frac{d}{dx}\right]$-basis of $W(\mathbf{f})$. But this is not useful in practice without a way to produce an upper bound $\beta$ that is guaranteed to be large enough. We accomplish this and a little more in the next result.

\begin{prop}\label{prop:w-space-bound} Let $f_1,\dots,f_n\in\Kx$ be proper and non-zero, let $m_i\in\N$ be the highest order of any pole of $f_i$, and let $m:=\mathrm{max}\{m_1,\dots,m_n\}$. Then the finite-dimensional $\K$-vector space \[W'(\mathbf{f}):=\bigl\{(\mathcal{L}_1,\dots,\mathcal{L}_n)\in W(\mathbf{f}) \ \big| \ \mathrm{ord}(\mathcal{L}_i)\leq m-m_i \ \ \text{for} \ \ 1\leq i \leq n\bigr\}\] contains a $\K\left[\frac{d}{dx}\right]$-basis of $W(\mathbf{f})$ as in \eqref{eq:w-space}.    
\end{prop}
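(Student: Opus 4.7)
The plan is to analyze $W(\mathbf{f})$ via the natural filtration on $\K[d/dx]^n$ by the weighted degree $M(\mathbf{L}) := \max_i(\mathrm{ord}(\mathcal{L}_i) + m_i)$, under which $W'(\mathbf{f})$ coincides with $\tilde W^m(\mathbf{f}) := \{\mathbf{L} \in W(\mathbf{f}) : M(\mathbf{L}) \leq m\}$; finite-dimensionality of $W'(\mathbf{f})$ is then immediate. Writing $F^M := \{\mathbf{L} : M(\mathbf{L}) \leq M\}$, the top-symbol map $\sigma_M : F^M \to \K^n$ sending $(\mathcal{L}_i) \mapsto (\lambda_{i, M-m_i})_i$ induces a $\K$-linear embedding $\tilde W^M(\mathbf{f}) / \tilde W^{M-1}(\mathbf{f}) \hookrightarrow \K^n$. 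By Corollary~\ref{cor:w-space} applied at order $k = M$ together with Proposition~\ref{prop:dres-sum}, the image is forced to lie inside
\[K_m := \Bigl\{(\mu_i) \in \K^n \; : \; \sum_{i} \mu_i \cdot \mathrm{dres}(f_i, \omega, m_i) = 0 \ \text{for every} \ \omega \in \overline\K/\Z\Bigr\}\]
(after absorbing the $i$-dependent but $M$-independent scaling factor $(-1)^{m-m_i}(m-1)!/(m_i-1)!$ coming from Corollary~\ref{cor:w-space}) for every $M \geq m$. The main step will be to show that the embedding $\sigma_m : W'(\mathbf{f})/\tilde W^{m-1}(\mathbf{f}) \hookrightarrow K_m$ is in fact surjective.

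Granting surjectivity of $\sigma_m$, I extract a $\K[d/dx]$-basis of $W(\mathbf{f})$ contained in $W'(\mathbf{f})$ as follows. For $M \geq m$, applying $(d/dx)^{M-m}$ to a lift in $W'(\mathbf{f})$ preserves leading coefficients, so the associated graded stabilizes as $\mathrm{gr}^M W(\mathbf{f}) = K_m$ for all $M \geq m$ and no new basis elements appear above weighted degree $m$. Proceeding level by level from $M = 1$ up to $M = m$, at each level pick a $\K$-basis of a complement of $\mathrm{gr}^{M-1} W(\mathbf{f})$ in $\mathrm{gr}^M W(\mathbf{f})$ (as nested subspaces of $\K^n$ via the $\sigma$-embedding), and lift each basis vector to $\tilde W^M(\mathbf{f}) \subseteq W'(\mathbf{f})$. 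The resulting finite list $\{e_1, \dots, e_r\} \subseteq W'(\mathbf{f})$ is $\K[d/dx]$-linearly independent because in any hypothetical relation $\sum_s P_s(d/dx) e_s = 0$, examining the maximum weighted degree $\max_s (M(e_s) + \mathrm{ord}(P_s))$ yields a forbidden $\K$-relation among the top symbols. They generate $W(\mathbf{f})$ by induction on $M(\mathbf{L})$ for $\mathbf{L} \in W(\mathbf{f})$: write $\sigma_{M(\mathbf{L})}(\mathbf{L})$ as a $\K$-combination of top symbols of the $e_s$ (using $(d/dx)$-shifts where necessary), subtract the corresponding $\K[d/dx]$-combination, and observe that $M(\mathbf{L})$ strictly decreases.

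The main obstacle is the surjectivity of $\sigma_m$, which I plan to establish via the equivalent strictness statement $\Phi(F^{m-1}) = F_{m-1} \cap \Phi(\K[d/dx]^n)$, where $\Phi$ denotes the $\K[d/dx]$-linear map $(\mathcal{L}_i) \mapsto \sum_i \mathcal{L}_i(f_i)$ into $\K(x)/\Delta(\K(x))$ and $F_{m-1}$ denotes the subspace of pole-order-$\leq m-1$ classes. The equivalence goes as follows: for $(\mu_i) \in K_m$, set $g := \sum_i \mu_i (d/dx)^{m-m_i} f_i$, which lies in $F_{m-1} \cap \Phi(\K[d/dx]^n)$ since its order-$m$ discrete residues vanish by assumption; any bounded-order representation $g = \sum_i \tilde{\mathcal{L}}_i(f_i)$ with $M((\tilde{\mathcal{L}}_i)) \leq m-1$ then produces the required lift $\mathbf{L}' := (\mu_i(d/dx)^{m-m_i} - \tilde{\mathcal{L}}_i)_i \in W'(\mathbf{f})$ with $\sigma_m(\mathbf{L}') = (\mu_i)$. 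To prove the strictness I would extend scalars to $\overline\K$ and exploit the decomposition $\overline\K(x)/\Delta(\overline\K(x)) = \bigoplus_{\omega \in \overline\K/\Z} B_\omega$ as $\overline\K[d/dx]$-module, where each block $B_\omega$ is free of rank~$1$ and the class $[f_i]$ projects in $B_\omega$ to the explicit polynomial $P_i^\omega(X) := \sum_{k=1}^{m_i} \mathrm{dres}(f_i, \omega, k)\,(-1)^{k-1}/(k-1)!\,X^{k-1}$ of degree $\leq m_i - 1$ in $X = d/dx$. Under this identification the strictness becomes a statement about $\overline\K[X]$-kernels of the matrix $(P_i^\omega)_{\omega, i}$, to be proved by inductive reduction on the pole-order level that controls the weighted degree in each $\omega$-block.
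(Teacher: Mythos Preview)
Your overall framework---filter $W(\mathbf{f})$ by the weighted degree $M(\mathbf{L})=\max_i(\mathrm{ord}(\mathcal{L}_i)+m_i)$, study the associated graded via the top-symbol map $\sigma_M$, and extract a basis once the graded stabilizes at level~$m$---is the right one, and is essentially the structure the paper uses. However, your central claim, that $\sigma_m\colon W'(\mathbf{f})/\tilde W^{m-1}(\mathbf{f})\to K_m$ is \emph{surjective} (equivalently, that $\Phi$ is strict at level $m-1$), is false in general.

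Here is a counterexample over $\K=\Q$. Take $n=2$, $f_1=\tfrac{1}{x^2}+\tfrac{1}{x}+\tfrac{1}{x-1/2}$ and $f_2=\tfrac{1}{x^2}$, so $m_1=m_2=m=2$ and the rescaling factors are all $1$. Then $K_m=\{(\mu_1,\mu_2):\mu_1+\mu_2=0\}$ is one-dimensional, since the only top-order constraint comes from $\omega(0)$. But $W'(\mathbf{f})=V(\mathbf{f})=\{(\lambda_1,\lambda_2)\in\K^2:\lambda_1f_1+\lambda_2f_2\text{ summable}\}=\{0\}$, because the order-$1$ residue at $\omega(1/2)$ forces $\lambda_1=0$ and then the order-$2$ residue at $\omega(0)$ forces $\lambda_2=0$. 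So $\mathrm{gr}^m=\{0\}\subsetneq K_m$. Equivalently, strictness fails: $g:=f_1-f_2=\tfrac{1}{x}+\tfrac{1}{x-1/2}$ lies in $F_1\cap\mathrm{im}(\Phi)$ but is not summable, while $\Phi(F^1)=\Phi(\{0\})=0$. The flaw in your block-by-block sketch is that $\deg P_i^\omega$ can be strictly smaller than $m_i-1$ at some orbits (here $P_1^{\omega(1/2)}$ has degree~$0$, not~$1$), so the weighted filtration on the source is not controlled by the degree filtration on any single block.

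What your basis argument actually needs is not $\mathrm{gr}^m=K_m$, but only the stabilization $\mathrm{gr}^M=\mathrm{gr}^m$ for all $M\geq m$, i.e., $\mathrm{gr}^M\subseteq\mathrm{gr}^m$. The paper proves this directly: given $\boldsymbol{\mathcal{L}}\in\tilde W^M(\mathbf{f})$ with $M>m$, define $\widetilde{\boldsymbol{\mathcal{L}}}$ by $\tilde\lambda_{i,j}:=\lambda_{i,M-m+j}$ for $0\leq j\leq m-m_i$ (taking the top block of coefficients, not just the single leading one). Then $\sigma_m(\widetilde{\boldsymbol{\mathcal{L}}})=\sigma_M(\boldsymbol{\mathcal{L}})$ by construction, and the key point is that the summability system from Proposition~\ref{prop:w-space} for $\widetilde{\boldsymbol{\mathcal{L}}}$ at orders $1\leq k\leq m$ is literally the system for $\boldsymbol{\mathcal{L}}$ at the shifted orders $M-m+1\leq k\leq M$, which holds since $\boldsymbol{\mathcal{L}}\in W(\mathbf{f})$. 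This shows $\widetilde{\boldsymbol{\mathcal{L}}}\in W'(\mathbf{f})$ without ever comparing to~$K_m$, and your basis-extraction argument then goes through verbatim with $K_m$ replaced by $\mathrm{gr}^m$.
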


\begin{proof}
    For $\boldsymbol{\mathcal{L}}=(\mathcal{L}_1,\dots,\mathcal{L}_n)\in W(\mathbf{f})$ as in \eqref{eq:w-space}, let $\beta_i:=\mathrm{ord}(\mathcal{L}_i)$
    and \begin{equation}\label{eq:M-bound}M(\boldsymbol{\mathcal{L}}):=\mathrm{max}\bigl\{\beta_1+m_1,\dots,\beta_n+m_n\bigr\}.\end{equation} Note that $M(\boldsymbol{\mathcal{L}})\leq m$ if and only if $\boldsymbol{\mathcal{L}}\in W'(\mathbf{f})$ already. Thus, supposing $M(\boldsymbol{\mathcal{L}})>m$, it suffices to construct $\widetilde{\boldsymbol{\mathcal{L}}}=(\widetilde{\mathcal{L}}_1,\dots,\widetilde{\mathcal{L}}_n)\in W'(\mathbf{f})$ such that \begin{equation}\label{eq:M-bound-reduced}M\left(\boldsymbol{\mathcal{L}}-\frac{d^{M(\boldsymbol{\mathcal{L}})-m}}{dx^{M(\boldsymbol{\mathcal{L}})-m}}\cdot\widetilde{\boldsymbol{\mathcal{L}}}\right)\leq M(\boldsymbol{\mathcal{L}})-1.\end{equation} To do this, let us simplify notation by writing $M:=M(\boldsymbol{\mathcal{L}})$, and write each \[\mathcal{L}_i=\sum_{j=0}^{\beta_i}\lambda_{i,j}\frac{d^j}{dx^j}.\] Then we \emph{define} the desired \begin{equation}\label{eq:L-tilde-def}\widetilde{\mathcal{L}}_i=\sum_{j=0}^{m-m_i}\tilde{\lambda}_{i,j}\frac{d^j}{dx^j}\qquad \text{by setting} \qquad
    \tilde{\lambda}_{i,j}:=\lambda_{i,M-m+j},\end{equation} with the convention that $\lambda_{i,M-m+j}:=0$ for every $0\leq j\leq m-m_i$ such that $M-m+j> \beta_i$. For any $1\leq i\leq n$ such that $\beta_i\geq M-m$, we have that $\tilde{\beta}_i:=\mathrm{ord}\left(\widetilde{\mathcal{L}}_i\right)$ is the largest $j\leq m-m_i$ such that $\lambda_{i,M-m+j}\neq 0$, which must be $\tilde{\beta}_i=\beta_i-(M-m)$, because $\beta_i+m_i\leq M$ by definition \eqref{eq:M-bound}, which implies that $\beta_i-(M-m)\leq m-m_i $. Thus, $M-m+\tilde{\beta}_i=\beta_i$ for each $1\leq i \leq n$ such that $\beta_i\geq M-m$, and 
    since $\tilde{\lambda}_{i,\tilde{\beta}_i}=\lambda_{i,\beta_i}$ for such $i$ by construction \eqref{eq:L-tilde-def}, we see that \begin{equation}\label{eq:beta-bound}\mathrm{ord}\left(\mathcal{L}_i-\frac{d^{M-m}}{dx^{M-m}}\cdot\widetilde{\mathcal{L}}_i\right)\leq\beta_i-1\end{equation}  for such $i$. We note there is at least one $i$ such that $\beta_i\geq M-m$, since $\beta_i+m_i=M$ for some $i$ and thus $\beta_i=M-m_i\geq M-m$ for such an $i$. For the remaining $1\leq i \leq n$ such that $\beta_i<M-m$, for which we have set $\widetilde{\mathcal{L}}_i=0$ in \eqref{eq:L-tilde-def}, we note that we must have $\beta_i+m_i\leq \beta_i+m< M$ by definition \eqref{eq:M-bound}. This, together with \eqref{eq:beta-bound}, implies \eqref{eq:M-bound-reduced}. Since $\tilde{\beta}_i\leq m-m_i$ by construction, it only remains to show that $\widetilde{\boldsymbol{\mathcal{L}}}\in W(\mathbf{f})$, i.e., that $\sum_{i=1}^n\widetilde{\mathcal{L}}_i(f_i)$ is summable, which by Proposition~\ref{prop:w-space} is equivalent to having \begin{equation}\label{eq:linear-system}\begin{gathered}\sum_{i=1}^n\sum_{j=1}^m\tilde{\lambda}_{i,k-j}\frac{(-1)^jD_{i,j}}{(j-1)!}=0 \\ \text{for each}\qquad 1\leq k \leq m;\end{gathered}\qquad\Longleftrightarrow\qquad\begin{gathered}\sum_{i=1}^n\sum_{j=1}^m\lambda_{i,k-j}\frac{(-1)^jD_{i,j}}{(j-1)!}=0 \\ \text{for each}\quad 1+M-n\leq k \leq M;\end{gathered}\end{equation} where the equivalence arises from the definition \eqref{eq:L-tilde-def} and the restricted ranges for the orders $k$ are because the highest order of any pole of $\widetilde{\mathcal{L}}_i(f_i)$ is $m$, and similarly the highest order of any pole of $\mathcal{L}_i(f_i)$ is at most $M$ (cf.~Remark~\ref{rem:w-space-bound-range}). By Proposition~\ref{prop:w-space}, the rightmost linear system in \eqref{eq:linear-system} is satisfied, concluding the proof.
    \end{proof}

\section{Connections with Galois theories of difference equations}\label{sec:galois}

Our interest in computing the $\K$-vector space $V(\mathbf{f})$ in \eqref{eq:v-space} and the $\K\left[\frac{d}{dx}\right]$-module $W(\mathbf{f})$ in \eqref{eq:w-space} for $\mathbf{f}=f_1,\dots,f_n\in\Kx^n$ arises from Galois theoretic-considerations. Let us explain how the algorithmic results of the previous section devolve into algorithms to compute Galois groups.

The Galois theory of linear difference equations developed in \cite{vanderput-singer:1997} associates with a system of linear difference equations over $\Kx$ a Galois group that encodes the algebraic relations among the solutions to the system. This Galois group is a linear algebraic group, that is, a group of matrices defined by a system of polynomial equations in the matrix entries. Associated with \mbox{$\mathbf{f}=(f_1,\dots,f_n)\in\Kx^n$} is the block-diagonal linear differential system \begin{equation}\label{eq:additive-system}\sigma(Y)=\begin{pmatrix}
    1 & f_1 \\ 0 & 1 \\ & &  \ddots \\ & & & 1& f_n \\ & & & 0 & 1
\end{pmatrix}Y.\end{equation} It follows from \cite[Prop.~3.1]{HardouinSinger2008} (see also \cite[Prop.~2.1]{Hardouin2008}) that the Galois group of \eqref{eq:additive-system} is the subgroup of $\mathbb{G}_a^n\simeq\overline\K^n$ consisting of $(\eta_1,\dots,\eta_n)$ such that $v_1\eta_1+\dots+v_n\eta_n=0$ for every $(v_1,\dots,v_n)\in V(\mathbf{f})$.

\begin{rem}
    We emphasize that the Galois group of \eqref{eq:additive-system} is \emph{not} $V(\mathbf{f})$ itself, as was incorrectly stated in \cite[\S8]{arreche-sitaula:2024}.
\end{rem}

Generalizing the theory of \cite{vanderput-singer:1997}, the differential Galois theory of difference equations developed in \cite{HardouinSinger2008} associates with a system of linear difference equations over $\Kx$ a differential Galois group that encodes the differential-algebraic relations among the solutions to the system. This differential Galois group is a linear differential algebraic group, that is, a group of matrices defined by a system of algebraic differential equations in the matrix entries (see \cite{cassidy:1972}, where the study of such groups was initiated). It follows from \cite[Prop.~3.1]{HardouinSinger2008} that the differential Galois group of \eqref{eq:additive-system} is the subgroup of $\mathbb{G}_a^n$ defined by the vanishing of the $n$-tuples of linear differential operators belonging to $W(\mathbf{f})$ (cf.~\cite[Prop.~11]{cassidy:1972}).

The algorithmic results of the previous section are also relevant to the Galois-theoretic study of diagonal systems \begin{equation}\label{eq:diagonal}
    \sigma(Y)=\begin{pmatrix} r_1 & & \\ & \ddots & \\ & & r_n\end{pmatrix}Y.
\end{equation} As shown in \cite[\S2.2]{vanderput-singer:1997}, the difference Galois group of \eqref{eq:diagonal} is \[\Gamma:=\left\{(\gamma_1,\dots,\gamma_n)\in\Bigl(\overline\K^\times\Bigr)^n \ \middle| \ \gamma_1^{e_1}\cdots\gamma_n^{e_n}=1 \ \text{for} \ \mathbf{e}\in E\right\},\] where $E\subseteq \Z^n$ is the subgroup of $\mathbf{e}=(e_1,\dots,e_n)$ such that \begin{equation}\label{eq:multiplicative-relation}r_1^{e_1}\cdots r_n^{e_n}=\sigma(p_\mathbf{e})/p_\mathbf{e}\end{equation} for some $p_\mathbf{e}\in\Kx^\times$. Now suppose \eqref{eq:multiplicative-relation} holds, and let us write \[f_i:=\frac{\frac{d}{dx}(r_i)}{r_i}\qquad \text{for}\quad 1\leq i\leq n;\qquad \text{and} \quad g_\mathbf{e}:=\frac{\frac{d}{dx}(p_\mathbf{e})}{p_\mathbf{e}};\] so that we have \begin{equation}\label{eq:special-multi-telescoper}e_1f_1+\dots+e_nf_n=\sigma(g_\mathbf{e})-g_\mathbf{e}.\end{equation} As in \cite[Cor.~2.1]{arreche:2017}, we observe that this version of problem \eqref{eq:multi-telescoper} is even more special because the $f_i$ have only first-order residues, and they all belong to $\Z$. So we can compute the $\mathbb{Q}$-vector space $V(\mathbf{f})$ of solutions to \eqref{eq:special-multi-telescoper} using $\mathtt{SimpleReduction}^+(f_1,\dots,f_n)$, just as in Proposition~\ref{prop:v-space}, as a preliminary step. Then we can compute a $\mathbb{Z}$-basis $\mathbf{e}_1,\dots,\mathbf{e}_s$ of the free abelian group $\tilde{E}:= V\cap\Z^n$. Since each $g_{\mathbf{e}_j}$ has only simple poles with integer residues, one can compute explicitly $p_{\mathbf{e}_j}\in\Kx$ such that $\frac{d}{dx}p_{\mathbf{e}_j}=g_{\mathbf{e}_j}p_{\mathbf{e}_j}$, and thence constants $\varepsilon_{j}\in\K^\times$ such that \[r_1^{e_{j,1}}\dots r_n^{e_{j,n}}=\varepsilon_j\frac{\sigma(p_{\mathbf{e}_j})}{p_{\mathbf{e}_j}}.\] This reduces the computation of $E$ from the defining multiplicative condition \eqref{eq:multiplicative-relation} in $\Kx^\times$ modulo the subgroup $\{\sigma(p) / p \ | \ p\in\Kx^\times\}$ to the equivalent defining condition in $\K^\times$: \[{ 
E=\left\{\sum_{j=1}^s m_j\mathbf{e}_j \in \tilde{E}\ \, \middle| \ \,\prod_{j=1}^s\gamma_j^{m_j}=1\right\}}.\]

\section{Preliminary implementations and timings}

We have developed preliminary implementations of Algorithm~\ref{alg:hermite-list}, Algorithm~\ref{alg:simple-reduction}, and Algorithm~\ref{alg:dres} in the computer algebra system Maple. These preliminary implementations are available upon email request to the first author. We plan to eventually make optimized implementations publically available.

\subsection{Implementation comments} Our implementation of $\mathtt{HermiteList}$ differs slightly from what is literally written in Algorithm~\ref{alg:hermite-list} in step 3, where our actual code sets \[(g,f_{m+1})\gets \mathtt{HermiteReduction}(g,x)\cdot\begin{pmatrix}
    (-1)^{m+1}(m+1) & 0 \\ 0 & 1
\end{pmatrix},\] which allows us to skip the computation of $m$ factorials in step 6 and return the $f_k$ directly from the loop in steps 2--5. Maple's implementation of what we have called $\mathtt{HermiteReduction}$ is contained in the more general command $\mathtt{ReduceHyperexp}$ in the package $\mathtt{DEtools}$.

Our implementation of $\mathtt{SimpleReduction}$ uses Maple's $\mathtt{autodispersion}$ command in the package $\mathtt{LREtools}$, then adds $1$ to all the operands to the output, to create the set $S$ in step 2 of Algorithm~\ref{alg:simple-reduction}. It is unsurprising, and indeed we have also verified experimentally, that this is vastly more efficient than calling on Algorithm~\ref{alg:shift-set}. In step 15 we had intended to call on Maple's conversion to partial fractions $\mathtt{parfrac}$ conversion procedure, whose programmer entry point form allows to input a numerator and list of factors of the denominator, and is supposed to return an output equivalent to \eqref{eq:parfrac-def}. This is essential to our implementation of Algorithm~\ref{alg:simple-reduction} (see steps~15--16). However, this conversion algorithm seems to have a bug: as of this writing, Maple returns an error for $\mathtt{convert}\bigl([1,[2x-1,1],[2x+1,1]],\mathtt{parfrac},x\bigr)$, explaining that the factors in the denominator must be relatively prime. In order to work around the unreliability of Maple's $\mathtt{parfrac}$ conversion, we have applied the $\mathtt{ExtendedEuclideanAlgorithm}$ command in the package $\mathtt{Algebraic}$ to compute the $a_\ell$ in step~15 of Algorithm~\ref{alg:simple-reduction} as the remainder of $a\cdot d_\ell$ modulo $b_\ell$, where $a$ is the numerator of $f$ and $d_\ell$ is the first B\'ezout coefficient in \[d_\ell\cdot\frac{b}{b_\ell}+\tilde{d}_\ell\cdot b_\ell=1.\]

\subsection{Timings} In \cite{arreche-sitaula:2024} we bemoaned the dearth of experimental evidence for the efficiency of our methods. We have now carried out some experiments with the implementations described above of Algorithm~\ref{alg:hermite-list}: $\mathtt{HermiteList}$ and Algorithm~\ref{alg:dres}: $\mathtt{DiscreteResidues}$ on Maple~19 in an up-to-date MacBook Pro with 8GB RAM on a 3.2GHz 8-core Apple M1 processor. We did not test Algorithm~\ref{alg:simple-reduction}: $\mathtt{SimpleReduction}$ individually because our implementation assumes that the input rational function is squarefree (even though this should of course be satisfied generically). We produced random rational functions with monic denominators of varying degrees $d$ and numerators of degree $d-1$ in increments of $20$ from $20$ to $500$, with the coefficients of the numerator and denominator polynomials taken randomly from the range $[-100,100]$. For each choice of degree, we computed the average execution time of $10$ runs of each algorithm. The results are in Table~\ref{tab:execution_time_HermiteList} and Table~\ref{tab:execution_time_DiscreteResidues} below.

\begin{table}
\caption{Average execution time for Algorithm~\ref{alg:hermite-list}: $\mathtt{HermiteList}$ on random rational functions of varying degrees.}
\bigskip
    \centering
    \begin{tabular}{|c|c|}
        \hline
        Degree & Average Time (seconds) \\
        \hline
        20  & 0.0215  \\
        \hline
        40  & 0.1372  \\
        \hline
        60  & 0.2091  \\
        \hline
        80  & 0.3835  \\
        \hline
        100 & 0.4908  \\
        \hline
        120 & 0.7858  \\
        \hline
        140 & 1.0239  \\
        \hline
        160 & 1.3994  \\
        \hline
        180 & 1.7390  \\
        \hline
        200 & 2.2978  \\
        \hline
        220 & 2.5288  \\
        \hline
        240 & 3.5972  \\
        \hline
        260 & 4.1288  \\
        \hline
        \end{tabular}
        \quad
        \begin{tabular}{|c|c|}
        \hline
        Degree & Average Time (seconds) \\
        \hline
        280 & 4.6270  \\
        \hline
        300 & 5.8105  \\
        \hline
        320 & 6.1173  \\
        \hline
        340 & 6.7782  \\
        \hline
        360 & 7.4908  \\
        \hline
        380 & 9.5754  \\
        \hline
        400 & 9.4504  \\
        \hline
        420 & 11.3759 \\
        \hline
        440 & 12.9153 \\
        \hline
        460 & 14.1303 \\
        \hline
        480 & 13.6151 \\
        \hline
        500 & 16.9689 \\
        \hline
        \multicolumn{1}{c}{}
    \end{tabular}
\label{tab:execution_time_HermiteList}
\end{table}

\begin{table}
\caption{Average execution time for Algorithm~\ref{alg:dres}: $\mathtt{DiscreteResidues}$ on random rational functions of varying degrees.}
\bigskip
    \centering
    \begin{tabular}{|c|c|}
        \hline
        Degree & Average Time (seconds) \\
        \hline
        20  & 0.0848  \\
        \hline
        40  & 0.1157  \\
        \hline
        60  & 0.3036  \\
        \hline
        80  & 0.6541  \\
        \hline
        100 & 1.2574  \\
        \hline
        120 & 2.2221  \\
        \hline
        140 & 3.6961  \\
        \hline
        160 & 5.5906  \\
        \hline
        180 & 8.3852 \\
        \hline
        200 & 12.6750 \\
        \hline
        220 & 17.4392 \\
        \hline
        240 & 24.0793 \\
        \hline
        260 & 31.7930 \\
        \hline
        \end{tabular}
        \quad
        \begin{tabular}{|c|c|}
        \hline
        Degree & Average Time (seconds) \\
        \hline
        280 & 42.5063 \\
        \hline
        300 & 54.2060 \\
        \hline
        320 & 67.7874 \\
        \hline
        340 & 83.4885 \\
        \hline
        360 & 101.7343 \\
        \hline
        380 & 124.0583 \\
        \hline
        400 & 151.0144 \\
        \hline
        420 & 180.0141 \\
        \hline
        440 & 210.6473 \\
        \hline
        460 & 250.6158 \\
        \hline
        480 & 287.9018 \\
        \hline
        500 & 335.6187 \\
        \hline
         \multicolumn{1}{c}{}
        \end{tabular}
\label{tab:execution_time_DiscreteResidues}
\end{table}

Although these timings suggest to us that the procedures we work very quickly on generic examples, this is probably only because generic rational functions have squarefree and shiftfree denominators already. For such rational functions, $\mathtt{HermiteList}$ (Algorithm~\ref{alg:hermite-list}) returns them unchanged, and so does Algorithm~\ref{alg:simple-reduction}: $\mathtt{SimpleReduction}$ (where we expect the bottleneck to be the computation of the eventually trivial autodispersion set), and then the computation of $\mathtt{FirstResidues}$ in step~4 of Algorithm~\ref{alg:dres}: $\mathtt{DiscreteResidues}$ should also be very fast. 

Thus a real stress test for our algorithms would come from ``generic worst-case'' examples, i.e., with denominators having high-degree irreducible factors occurring at high multiplicities with large dispersion sets. We have already carried out such tests on $\mathtt{HermiteList}$, for which the dispersion issue is irrelevant, as follows. For each \emph{seed degree} $d\in\{1,\dots,5\}$, we have produced random monic polynomials $R_1,\dots,R_{10}$ of degree $d$, again with coefficients taken randomly from the range $[-100,100]$, and then used them to create the denominator polynomial $\prod_{i=1}^{10}(R_i)^i$, which has degree $55d$, and finally made a rational function with a similarly random (no longer necessarily monic) numerator polnynomial of degree $55d-1$. For each such choice of seed degree $d$, we again computed the average execution time of $10$ runs of the $\mathtt{HermiteList}$ algorithm. The results are in the Table~\ref{tab:hermite-worst-case}. Note that in these worst-case experiments we are forcing the loop in steps 2--5 of Algorithm~\ref{alg:hermite-list} to run for 10 iterations (generically, that is, but always \emph{at least} this many times), with the $i$-th iteration computing Maple's $\mathtt{ReduceHyperexp}$ (from the package $\mathtt{DEtools}$) on a rational function of degree $d(11-i)(12-i)/2$. However, as we can begin to see in the small explicit examples discussed in the next section, the coefficients of the successive outputs in the loop seem to grow quite fast, which helps explain why the timings in Table~\eqref{tab:hermite-worst-case} are so large when compared to the baseline timings in Table~\ref{tab:execution_time_HermiteList}.

\begin{table}
\caption{Average execution time for Algorithm~\ref{alg:hermite-list}: $\mathtt{HermiteList}$ on random worst-case rational functions of varying degrees.}
\bigskip
    \centering
    \begin{tabular}{|c|c|c|}
\hline 
Seed degree & Total degree & Average time (seconds) \\
\hline
1 & 55 & 0.6857 \\
\hline
2 & 110 & 7.0305 \\
\hline
3 & 165 & 29.3660 \\ 
\hline
4 & 220 & 80.3166 \\
\hline
5 & 275 & 168.3885\\
\hline
    \end{tabular}
\label{tab:hermite-worst-case}
\end{table}

\section{Examples} Let us conclude by illustrating some of our procedures on some concrete examples. As in \cite{arreche-sitaula:2024}, we write down explicitly the irreducible factorizations of denominator polynomials in order to make it easier for the human reader to verify the correctness of the computations. We emphasize and insist upon the fact that none of our procedures performs any such factorization, except possibly within the black box computation of $\mathtt{ShiftSet}$. 

\subsection{Example} The following example is treated in \cite[Example~6.6]{Paule:1995}. Consider the rational function
\begin{equation}
    \label{eq:ex2-f}f:=\frac{x+2}{x(x^2-1)^2(x^2+2)^2}
\end{equation}
We first apply Algorithm~\ref{alg:hermite-list} to compute $\mathtt{HermiteList}(f)=:(f_1,f_2)$, which results in
\begin{align*}
    f_1&=-\frac{x^3+4x^2+13x+36}{36x(x^2-1)(x^2+2)};\\
    f_2&=\frac{x^3+2x^2+5x+10}{36(x^2-1)(x^2+2)}.
\end{align*} The rest of the algorithm works on these $f_1$ and $f_2$ components in parallel.

Let us begin with the simplest case $f_2$. Denoting its monic denominator 
\[b_2=(x^2-1)(x^2+2),\] we compute $\mathtt{ShiftSet}(b_2)=\{2\}$, and the factorization $b_2=b_{2,0}b_{2,2}$ computed within Algorithm~\ref{alg:simple-reduction} is given by \[b_{2,0}=(x+1)(x^2+2)\qquad\text{and}\qquad b_{2,2}=\mathrm{gcd}(\sigma^{-2}(b_{2,0}),b_2)=x-1.\]The individual summands in the partial fraction decomposition of $f_2$ relative to this factorization are given by \[\frac{a_{2,0}}{b_{2,0}}=-\frac{2x^2+3x+4}{36(x+1)(x^2+2)}\qquad\text{and}\qquad\frac{a_{2,2}}{b_{2,2}}=\frac{1}{12(x-1)}.\] The reduced form $\bar{f}_2=\frac{a_{2,0}}{b_{2,0}}+\sigma^2\left(\frac{a_{2,2}}{b_{2,2}}\right)$ is given by \[\bar{f}_2=\frac{x^2-3x+2}{36(x+1)(x^2+2)}.\] The pair of polynomials $(B_2,D_2)\in\K[x]\times\K[x]$ encoding the discrete residues of order $2$ of $f$ is given by \begin{equation}\label{eq:paule-ex-2}B_2=(x+1)(x^2+2)\qquad\text{and}\qquad D_2=\frac{1}{36}x^2+\frac{1}{72}x+\frac{1}{24}.\end{equation}

Let us now continue with the last case $f_1$. Denoting its monic denominator \[b_1=x(x^2-1)(x^2+2),\] we compute $\mathtt{ShiftSet}(b_1)=\{1,2\}$, and the factorization $b_1=b_{1,0}b_{1,1}b_{1,2}$ computed within Algorithm~\ref{alg:simple-reduction} is given by
\begin{gather*}b_{1,0}=(x+1)(x^2+2);\qquad b_{1,1}=\mathrm{gcd}(\sigma^{-1}(b_{1,0}),b_1)=x;\\ \text{and}\qquad b_{1,2}=\mathrm{gcd}(\sigma^{-2}(b_{1,0}),b_1)=x-1.\end{gather*} The individual summands in the partial fraction decomposition of $f_1$ relative to this factorization are given by \[\frac{a_{1,0}}{b_{1,0}}=-\frac{9x^2+x+5}{36(x+1)(x^2+2)};\qquad\frac{a_{1,1}}{b_{1,1}}=\frac{1}{2x};\qquad\text{and}\qquad\frac{a_{1,2}}{b_{1,2}}=\frac{-1}{4(x-1)}.\] The reduced form $\bar{f}_1=\frac{a_{1,0}}{b_{1,0}}+\sigma\left(\frac{a_{1,1}}{b_{1,1}}\right)+\sigma^2\left(\frac{a_{1,2}}{b_{1,2}}\right)$ is given by \[\bar{f}_1=\frac{-x+13}{36(x+1)(x^2+2)}.\] The pair of polynomials $(B_1,D_1)\in\K[x]\times\K[x]$ encoding the discrete residues of order $1$ of $f$ is given by \begin{equation}\label{eq:paule-ex-1}B_1=(x+1)(x^2+2)\qquad\text{and}\qquad D_1=-\frac{73}{1296}x^2-\frac{11}{432}x+\frac{51}{648}.\end{equation}

Let us now compare the discrete residues of $f$ according to Definition~\ref{def:dres} with the combined output \[\mathtt{DiscreteResidues}(f)=\Bigl((B_1,D_1),(B_2,D_2)\Bigr)\] from \eqref{eq:paule-ex-1} and \eqref{eq:paule-ex-2} produced by Algorithm~\ref{alg:dres}. The complete partial fraction decomposition of $f$ is given by 
\begin{multline*}\overbrace{\frac{-13/108}{x+1}+\frac{1/2}{x}+\frac{-1/4}{x-1}+\frac{(-28\pm11\sqrt{-2})/432}{x\pm\sqrt{-2}}}^{f_1}\\ +\underbrace{\frac{-1/36}{(x+1)^2}+\frac{1/12}{(x-1)^2}+\frac{(-1\mp\sqrt{-2})/72}{(x\pm\sqrt{-2})^2}}_{-\frac{d}{dx}f_2}.
\end{multline*} We see in particular that the set of poles of $f$ is \[\{-1,0,1,\sqrt{-2},-\sqrt{-2}\},\] and that each of these poles belongs to one of the three orbits \[\omega(0)=\Z;\qquad\omega(\sqrt{-2})=\sqrt{-2}+\Z;\qquad\text{and}\qquad\omega(-\sqrt{-2})=-\sqrt{-2}+\Z.\] Therefore, $f$ has no discrete residues outside of these orbits. Moreover, we verify that the set of roots $\{-1,\sqrt{-2},-\sqrt{-2}\}$ of $B_1=B_2$ contains precisely one representative of each of these orbits. Now we can compute the discrete residues of $f$ as in \eqref{eq:dres-def}:
\begin{align*}
    \mathrm{dres}(f,\omega(0),1) &=\frac{-13}{108}+\frac{1}{2} +\frac{-1}{4}=\frac{7}{54}=D_1(-1)         \\
    \mathrm{dres}(f,\omega(\sqrt{-2}),1) &=  \frac{-28-11\sqrt{-2}}{432}=D_1(\sqrt{-2})        \\
    \mathrm{dres}(f,\omega(-\sqrt{-2}),1) &=  \frac{-28+11\sqrt{-2}}{432}=D_1(-\sqrt{-2})           \\
    \mathrm{dres}(f,\omega(0),2) &=\frac{-1}{36}+\frac{1}{12} =\frac{1}{18}=D_2(-1)         \\
    \mathrm{dres}(f,\omega(\sqrt{-2}),2) &=   \frac{-1+\sqrt{-2}}{72}=D_2(\sqrt{-2})       \\
    \mathrm{dres}(f,\omega(-\sqrt{-2}),2) &=   \frac{-1-\sqrt{-2}}{72}=D_2(-\sqrt{-2}).         
\end{align*}

\subsection{Example} The following example was partially considered in \cite{arreche-sitaula:2024}, where it was incorrectly attributed  to the earlier references \cite[\S3]{Malm:1995} and \cite[\S8]{Pirastu1995b}, in which the authors consider a nearly identical rational function save for a factor of $(x+2)^2$ in the denominator in place of our factor of $(x+2)^3$. We continue to consider as in \cite{arreche-sitaula:2024} the rational function 
\begin{equation}\label{eq:ex1-f}f:=\frac{1}{x^3(x + 2)^3(x + 3)(x^2 + 1)(x^2 + 4x + 5)^2}.  \end{equation}
We first apply Algorithm~\ref{alg:hermite-list} to compute $\mathtt{HermiteList}(f)=:(f_1,f_2,f_3)$, which results in
\begin{align*}
    f_1 &= \frac{787x^5 + 4803x^4 + 9659x^3 + 9721x^2 + 9502x + 5008}{18000(x^2 + 1)(x + 3)(x^2 + 4x + 5)(x + 2)x};\\
f_2 &=-\frac{787x^3 + 3372x^2 + 4696x + 1030}{18000(x^2 + 4x + 5)x(x + 2)};\vphantom{\begin{matrix}X \\ X \\X \end{matrix}}\\
f_3&=-\frac{7x - 1}{300(x + 2)x}.
    \end{align*} The rest of the algorithm works on these $f_1$, $f_2$, and $f_3$ components in parallel. 
    
    Let us begin with the simplest case $f_3$. Denoting its monic denominator \[b_3=x(x+2),\] we compute 
    $\mathtt{ShiftSet}(b_3)=\{2\}$, and the factorization $b_3=b_{3,0}b_{3,2}$ computed within Algorithm~\ref{alg:simple-reduction} is given by \[b_{3,0}=x+2;\qquad \text{and}\qquad b_{3,2}=\mathrm{gcd}(\sigma^{-2}(b_{3,0}),b_3)=x.\] The individual summands in the partial fraction decomposition of $f_3$ with respect to this factorization (which, in this exceptionally simple case, happens to coincide with the complete partial fraction decomposition of $f_3$) are given by \[\frac{a_{3,0}}{b_{3,0}}=\frac{-1}{40(x+2)};\qquad \text{and}\qquad \frac{a_{3,2}}{b_{3,2}}=\frac{1}{600x}.\] The reduced form $\bar{f}_3=\frac{a_{3,0}}{b_{3,0}}+\sigma^2\left(\frac{a_{3,2}}{b_{3,2}}\right)$ is given by \[\bar{f}_3=\frac{-7}{300(x+2)}.\] The pair of polynomials $(B_3,D_3)\in\K[x]\times\K[x]$ encoding the discrete residues of order $3$ of $f$ is given by \begin{equation}\label{eq:ex1out3}B_3=x+2\qquad \text{and}\qquad D_3=-7/300.\end{equation}

    Let us now continue with the next case $f_2$. Denoting its monic denominator \[b_2=(x^2+4x+5)x(x+2),\] we compute $\mathtt{ShiftSet}(b_2)=\{2\}$, and the factorization $b_2=b_{2,0}b_{2,2}$ computed within Algorithm~\ref{alg:simple-reduction} is given by \[b_{2,0}=(x^2+4x+5)(x+2);\qquad \text{and}\qquad b_{2,2}=\mathrm{gcd}(\sigma^{-2}(b_{2,0}),b_2)=x.\] The individual summands in the partial fraction decomposition of $f_2$ with respect to this factorization are given by \[\frac{a_{2,0}}{b_{2,0}}=-\frac{(x^2+306x+373)}{2000(x^2+4x+5)(x+2)};\qquad \text{and}\qquad \frac{a_{2,2}}{b_{2,2}}=-\frac{103}{18000x}.\] The reduced form $\bar{f}_2=\frac{a_{2,0}}{b_{2,0}}+\sigma^2\left(\frac{a_{2,2}}{b_{2,2}}\right)$ is given by \[\bar{f}_2=-\frac{787x^2+3166x+3872}{18000(x^2+4x+5)(x+2)}.\] The pair of polynomials $(B_2,D_2)\in\K[x]\times\K[x]$ encoding the discrete residues of order $2$ of $f$ is given by \begin{equation}\label{eq:ex1out2}B_2=(x^2+4x+5)(x+2)\qquad \text{and}\qquad D_2=-\frac{1277}{36000}x^2-\frac{509}{3600}x-\frac{403}{2250}.\end{equation}

    There remains the most complicated case of $f_1$, for which we reproduce the computation from \cite{arreche-sitaula:2024} for ease of reference. Denoting by
    \[ b_1:=(x^2 + 1)(x + 3)(x^2 + 4x + 5)(x + 2)x\] the monic denominator of $f_1$, we compute $\mathtt{ShiftSet}(b_1)=\{1,2,3\}$. The factorization $b_1=b_{1,0}b_{1,1}b_{1,2}b_{1,3}$ computed within Algorithm~\ref{alg:simple-reduction} is given by
    \begin{align*}
    b_{1,0}&=(x + 3)(x^2 + 4x + 5); &
    b_{1,1} &=\mathrm{gcd}(\sigma^{-1}(b_{1,0}),b_1)=x+2;\\
    b_{1,2} & =\mathrm{gcd}(\sigma^{-2}(b_{1,0}),b_1)= x^2+1;\hphantom{XX} &
    b_{1,3} &=\mathrm{gcd}(\sigma^{-3}(b_{1,0}),b_1)=x.
\end{align*}
The individual summands in the partial fraction decomposition of $f_1$ with respect to this factorization are given by
   \begin{align*}
    \frac{a_{1,0}}{b_{1,0}}&=-\frac{13391x^2 + 37742x - 9293}{1080000(x + 3)(x^2 + 4x + 5)}; &
    \frac{a_{1,1}}{b_{1,1}} &=\frac{1}{250(x+2)}\vphantom{\begin{matrix}X\\ X \\ X \end{matrix}};\\
    \frac{a_{1,2}}{b_{1,2}} &= \frac{-7x-1}{8000(x^2+1)}; &
    \frac{a_{1,3}}{b_{1,3}} &=\frac{313}{33750x}.
\end{align*}
The reduced form $\bar{f}_1=\frac{a_{1,0}}{b_{1,0}}+\sigma\left(\frac{a_{1,1}}{b_{1,1}}\right)+\sigma^2\left(\frac{a_{1,2}}{b_{1,2}}\right)+ \sigma^3\left(\frac{a_{1,3}}{b_{1,3}}\right)$ is given by \[\bar{f}_1=\frac{273x + 1387}{20000(x + 3)(x^2 + 4x + 5)}.\] The pair of polynomials $(B_1,D_1)\in\K[x]\times\K[x]$ encoding the discrete residues of order $1$ of $f$ is given by \begin{equation}\label{eq:ex1out1}B_1=(x+3)(x^2+4x+5) \qquad \text{and} \qquad D_1=\frac{59}{16000}x^2+\frac{33}{40000}x-\frac{1321}{80000}.\end{equation}

Let us now compare the discrete residues of $f$ according to Definition~\ref{def:dres} with the combined output \[\mathtt{DiscreteResidues}(f)=\Bigl((B_1,D_1),(B_2,D_2),(B_3,D_3)\Bigr)\] from \eqref{eq:ex1out1}, \eqref{eq:ex1out2}, and \eqref{eq:ex1out3} produced by Algorithm~\ref{alg:dres}. The complete partial fraction decomposition of $f$ in \eqref{eq:ex1-f} is given by 
\begin{multline*}
    \overbrace{\frac{1/1080}{x+3}+\frac{1/250}{x+2}+\frac{313/33750}{x}+\frac{-7\mp\sqrt{-1}}{x\pm\sqrt{-1}}+\frac{-533\pm1119\sqrt{-1}}{x+2\pm\sqrt{-1}}}^{f_1}\\
    \vphantom{\begin{matrix}X\\ X \\ X\end{matrix}}+\underbrace{\frac{-13/400}{(x+2)^2}+\frac{-103/18000}{x^2}+\frac{(-11\mp2\sqrt{-1})/4000}{(x+2\pm\sqrt{-1})^2}}_{-\frac{d}{dx}f_2}
    {}+\underbrace{\frac{-1/40}{(x+2)^3}+\frac{1/600}{x^3}}_{\frac{1}{2}\frac{d^2}{dx^2}f_3}
\end{multline*}
We see in particular that the set of poles of $f$ is \[\bigl\{-3,-2,0,\sqrt{-1}-2, \sqrt{-1},-\sqrt{-1}, -\sqrt{-1}-2\bigr\},\] 
and that each of these poles belongs to one of the three orbits \[\omega(0)=\mathbb{Z};\qquad\omega\bigl( \sqrt{-1}\bigr)=\sqrt{-1}+\Z;\qquad\text{and}\qquad\omega\bigl(- \sqrt{-1}\bigr)=-\sqrt{-1}+\Z.\] Therefore, $f$ has no discrete residues outside of these orbits. We see that the single root $-2$ of the polynomial $B_3$ represents the only orbit $\omega(0)$ at which $f$ has poles of order $3$. Moreover, we verify that the set of roots $\{-2,\sqrt{-1}-2,-\sqrt{-1}-2\}$ of the polynomial $B_2$ in \eqref{eq:ex1out2} contains precisely one representative of each of the orbits at which $f$ has poles of order $2$, and that similarly so does the set of roots $\{-3,\sqrt{-1}-2,-\sqrt{-1}-2\}$ of the polynomial $B_1$ in \eqref{eq:ex1out1} contain precisely one representative of each orbit at which $f$ has poles of order $1$. And yet, the representatives $-3$ in $B_1$ and $-2$ in $B_2$ for the same orbit $\omega(0)$ do not agree (cf.~Remark~\ref{rem:dres-deficiency}). Now we compute the discrete residues of $f$ as in \eqref{eq:dres-def}:
\begin{align*}
    \mathrm{dres}\bigl(f,\omega(0),1\bigr) &= \frac{1}{1080}+\frac{1}{250}+\frac{313}{33750}=\frac{71}{5000}      \\
    \mathrm{dres}\bigl(f,\omega(\sqrt{-1}),1\bigr) &= \frac{-533-1119\sqrt{-1}}{80000} +\frac{-7+\sqrt{-1}}{16000}=\frac{-284-557\sqrt{-1}}{40000}    \\   \mathrm{dres}\bigl(f,\omega(-\sqrt{-1}),1\bigr) &= \frac{-533+1119\sqrt{-1}}{80000} +\frac{-7-\sqrt{-1}}{16000}=\frac{-284+557\sqrt{-1}}{40000}        \\   \mathrm{dres}\bigl(f,\omega(0),2\bigr) &= -\frac{13}{400}-\frac{103}{18000}=-\frac{43}{1125}      \\   \mathrm{dres}\bigl(f,\omega(\sqrt{-1}),2\bigr) &=  \frac{-11+2\sqrt{-1}}{4000}     \\   \mathrm{dres}\bigl(f,\omega(-\sqrt{-1}),2\bigr) &=  \frac{-11-2\sqrt{-1}}{4000}     \\   \mathrm{dres}\bigl(f,\omega(0),3\bigr) &= \frac{-1}{40}+\frac{1}{600}=\frac{-7}{300}      
\end{align*}
Finally, we check that these discrete residues computed from the definition indeed agree with the values of the polynomials $D_k$ at the roots of $B_k$ that represent each given orbit:
\begin{gather*}
    D_1(-3)=\tfrac{71}{5000};\qquad     
    D_1(-2\pm\sqrt{-1})=\tfrac{-284\mp557\sqrt{-1}}{40000}; \qquad         D_2(-2)=-\tfrac{43}{1125};\\        
    D_2(-2\pm\sqrt{-1}) =  \tfrac{-11\pm2\sqrt{-1}}{4000};\qquad\text{and}\qquad     D_3(-2) =\tfrac{-7}{300}.    
\end{gather*}

\begin{rem}\label{rem:ex1}
    As discussed in Remark~\ref{rem:dres-deficiency-fix}, we can apply the modified Algorithm~\ref{alg:multi-dres} to $f$, instead of Algorithm~\ref{alg:dres} as we have done above, in order to address the deficiency that the different zeros $-3$ and $-2$ of the polynomials $B_1$ and $B_2$ respectively represent the same orbit $\omega(0)=\mathbb{Z}$. The output in this case would be \[\mathtt{DiscreteResidues}^+\bigl((f)\bigr)=\Bigl(B,(\tilde{D}_{1},\tilde{D}_{2},\tilde{D}_{3})\Bigr),\] where 
    \begin{align*}
        B&=(x+3)(x^2+4x+5)=B_1;\\
        \tilde{D}_1&= \frac{59}{16000}x^2+\frac{33}{40000}x-\frac{1321}{80000}=D_1;\\
        \tilde{D}_2&=-\frac{1259}{72000}x^2-\frac{5}{72}x-\frac{6421}{72000};\\
        \tilde{D}_3&=\frac{-7}{600}x^2-\frac{7}{150}x-\frac{35}{600}.
    \end{align*} Obviously, $\tilde{D}_1=D_1$ still evaluates to the first order residue of $f$ at each orbit represented by a root of $B=B_1$, but now the new polynomials $\tilde{D}_2$ and $\tilde{D}_3$ also still enjoy the same property:
    \begin{gather*}\tilde{D}_2(-3)=\frac{-43}{1125};\qquad\tilde{D}_2(-2\pm\sqrt{-1})=\frac{-11\pm2\sqrt{-1}}{4000};\\
    \tilde{D}_3(-3)=\frac{-7}{300};\qquad\text{and}\qquad \tilde{D}_3(-2\pm\sqrt{-1})=0.\end{gather*}
\end{rem}

\end{document}